\newcommand{\sm}{\setminus}
\newcommand{\abs}[1]{\left \lvert #1 \right \rvert}
\newcommand{\f}[2]{\frac{#1}{#2}}
\newcommand{\OPT}{\mathrm{OPT}}
\newcommand{\bbr}{{\mathbb R}}
\newcommand{\R}{\bbr}
\newcommand{\grs}{{\sigma}}
\newcommand{\ve}{{\varepsilon}}
\newcommand{\cala}{\mathcal{A}}
\newcommand{\calg}{\mathcal{G}}
\newcommand{\calj}{\mathcal{J}}
\newcommand{\calm}{\mathcal{M}}
\newcommand{\problemtitle}[1]{\gdef\@problemtitle{#1}}
\newcommand{\probleminput}[1]{\gdef\@probleminput{#1}}
\newcommand{\problemoutput}[1]{\gdef\@problemoutput{#1}}
  \par\addvspace{.5\baselineskip}
  \par\addvspace{.5\baselineskip}
\newtheorem{theorem}{Theorem}
\newtheorem{lemma}[theorem]{Lemma}
\newtheorem{proposition}[theorem]{Proposition}
\newtheorem{definition}[theorem]{Definition}
\title{An Additive Approximation Scheme for the Nash Social Welfare Maximization with Identical Additive Valuations\thanks{This work is partly supported by JSPS, KAKENHI grant numbers JP18H05291, JP19H05485, and JP20K11692, Japan.}} 
\author{Asei Inoue\thanks{Kyoto University, Japan.} \and Yusuke Kobayashi\thanks{Kyoto University, Japan. yusuke@kurims.kyoto-u.ac.jp}}
\begin{document}

\maketitle

\begin{abstract}
We study the problem of efficiently and fairly allocating a set of indivisible goods among agents with identical and additive valuations for the goods. The objective is to maximize the Nash social welfare, which is the geometric mean of the agents' valuations. While maximizing the Nash social welfare is NP-hard, a PTAS for this problem is presented by Nguyen and Rothe. The main contribution of this paper is to design a first additive PTAS for this problem, that is, we give a polynomial-time algorithm that maximizes the Nash social welfare within an additive error $\varepsilon v_{\rm max}$, where $\varepsilon$ is an arbitrary positive number and $v_{\rm max}$ is the maximum utility of a good. The approximation performance of our algorithm is better than that of a PTAS. The idea of our algorithm is simple; we apply a preprocessing and then utilize an additive PTAS for the target load balancing problem given recently by Buchem et al. However, a nontrivial amount of work is required to evaluate the additive error of the output. 
\end{abstract}

\section{Introduction}

\subsection{Nash Social Welfare Maximization}

We study the problem of efficiently and fairly allocating a set of indivisible goods among agents with identical and additive valuations for the goods. There are many ways to measure the quality of the allocation in the literature, and in this paper, we aim to maximize the Nash social welfare \cite{kaneko1979nash}, which is the geometric mean of the agents' valuations in the allocation. 

Suppose we are given a set of agents $\cala = \{1,2 ,\dots , n \}$ and a set of goods $\calg = \{1, 2, \dots , m \}$ with a utility $v_j > 0$ for each $j \in \calg$. An \textit{allocation} is a partition $\pi = (\pi_1, \dots , \pi_n)$ of $\calg$ where $ \pi_i  \subseteq \calg$ is a set of goods assigned to agent $i$. For an allocation $\pi = (\pi_1, \dots , \pi_n)$, let $v(\pi_i)$ be the valuation of $i$ that is defined as the sum of the utility of the goods assigned to $i$, i.e., $v(\pi_i)  = \sum_{j \in \pi_i } v_j$. 
The goal is to find an allocation $\pi$ that maximizes the function 
\[
f(\pi) = \left( \prod_{i \in \cala} v(\pi_i) \right)^{1/n},
\] 
which is called the \textit{Nash social welfare} \cite{barman2018greedy,nguyen2014minimizing}. 
In this paper, we refer to this problem as \textsc{Identical Additive NSW}.

\begin{problem}
\problemtitle{\textsc{Identical Additive NSW}}
\probleminput{A set of agents $\cala = \{1,2 ,\dots , n \}$ and a set of goods $\calg = \{1, 2, \dots , m \}$ with a utility $v_j > 0$ for each $j \in \calg$.}
\problemoutput{An allocation $\pi$ that maximizes the Nash social welfare $f(\pi)$.}
\end{problem}

The Nash social welfare can be defined in a more general setting where the valuation of each agent $i$ is determined by a set function $v_i \colon 2^{\calg}\to \R_{\geq 0}$. In such a case, the Nash social welfare of an allocation $\pi = (\pi_1, \dots , \pi_n)$ is defined as 
$\left( \prod_{i \in \cala} v_i(\pi_i) \right)^{1/n}$. 
In \textsc{Identical Additive NSW}, we focus on the case where the valuation function is additive and independent of the agent. Note that, by removing goods with zero utility, we can assume that $v_j >0$ without loss of generality.

The Nash social welfare was named after John Nash, who introduced and studied the Nash social welfare in the context of bargaining in the 1950s \cite{nash1950bargaining}. Later, the same concept was independently studied in the context of competitive equilibria with equal incomes \cite{varian1974equity} and proportional fairness in networking \cite{kelly1997charging}. It has traditionally been studied in the economics literature for divisible goods \cite{moulin2003fair}. For divisible goods, an allocation maximizing the Nash social welfare can be computed in polynomial time when the valuation functions are additive \cite{eisenberg1959consensus}.

In the context of goods allocation, the Nash social welfare is a measure that captures efficiency and fairness at the same time. 
To see this, for a parameter $q \in \R$ and for an allocation $\pi$, 
one can define the {\em generalized mean} of the valuation of each agent as $f_q(\pi) = \left( \f{1}{n} \sum_{i=1}^n v_i(\pi_i)^q \right)^{1/q}$. 
The generalized mean can be a variety of mean functions depending on the value of $q$. 
When $q=1$, $f_q(\pi)$ is the average valuation of the agents, and hence maximizing $f_q(\pi)$ is equivalent to maximizing the social welfare. 
In this case, $f_q(\pi)$ is a measure of the efficiency of an allocation. 
When $q \to - \infty$, $f_q(\pi)$ is the minimum value of $v_i (\pi_i)$, namely the valuation of the least satisfied agent. 
In this case, an allocation maximizing $f_q(\pi)$ can be considered fair in a sense. 
It is known that in the limit as $q \to 0$, $f_q(\pi)$ coincides with the geometric mean, which is the Nash social welfare (see~\cite{bullen2013handbook}). 
Therefore, maximizing the Nash social welfare (i.e., $q \to 0$) can be viewed as a compromise between Maximum Social Welfare (i.e., $q=1$) and Max-Min Welfare (i.e., $q \to -\infty$).

The Nash social welfare is closely related to other concepts EF1 and Pareto optimality that describe fairness and efficiency, respectively,  
which also supports the importance of the Nash social welfare. 
An allocation is said to be {\em EF1} ({\em envy-free up to at most one good}) if each agent prefers its own bundle over the bundle of any other agent up to the removal of one good. 
An allocation is called {\em Pareto optimal} if no one else's valuation can be increased without sacrificing someone else's valuation. 
Caragiannis et al.~\cite{caragiannis2019unreasonable} showed that an allocation that maximizes the Nash social welfare is both EF1 and Pareto optimal when agents have additive valuations for the goods. 
This motivates studying the problem of finding an allocation that maximizes the Nash social welfare.

\subsection{Our Contribution: Approximation Algorithm}

The topic of this paper is the approximability of the Nash social welfare maximization. 
By an easy reduction from the Subset Sum problem, we can see that maximizing the Nash social welfare is NP-hard even in the case of two agents with identical additive valuations. That is, \textsc{Identical Additive NSW} is NP-hard even when $n=2$. Furthermore, maximizing the Nash social welfare is APX-hard for multiple agents with non-identical valuations even when the valuations are additive \cite{lee2017apx}. 

On a positive side, several approximation algorithms are proposed for maximizing the Nash social welfare, and the difficulty of the problem depends on the class of valuations $v_i$. Under the assumption that the valuation set function is monotone and submodular, 
Li and Vondr{\'a}k~\cite{li2021constant} recently proposed
a constant factor approximation algorithm 
based on an algorithm for Rado valuations~\cite{10.1145/3406325.3451031}. 
Better constant factor approximation algorithms are known for subclasses of submodular functions~\cite{10.5555/3458064.3458133,10.5555/3174304.3175452,chaudhury_et_al:LIPIcs:2018:9924,DBLP:conf/soda/GargHM18}. 
When the valuation function is additive, a $1.45$-approximation algorithm is known \cite{barman2018finding}, and this is the current best approximation ratio. When the valuation functions are additive and identical, the situation is much more tractable. Indeed, for \textsc{Identical Additive NSW}, it is known that a polynomial-time approximation scheme (PTAS) exists \cite{nguyen2014minimizing} and a simple fast greedy algorithm achieves a $1.061$-approximation guarantee \cite{barman2018greedy}. 

For \textsc{Identical Additive NSW}, 
the above results show the limit of the approximability and so no further improvement seems to be possible in terms of the approximation ratio. Nevertheless, a better approximation algorithm may exist if we evaluate the approximation performance in a fine-grained way. The main contribution of this paper is to show that this is indeed the case if we evaluate the approximation performance by using the additive error. Formally, our result is stated as follows.

\begin{theorem}\label{main result}
For an instance of \textsc{Identical Additive NSW}, let $v_{\max} = \max_{j \in \calg} v_j$ and let $\OPT$ be the optimal value. 
For any $ \ve>0$, there is an algorithm $A_{\ve}$ for \textsc{Identical Additive NSW} that runs in $(n m/\ve)^{O(1/\ve)}$ time and returns an allocation $\pi$ such that 
$f(\pi) \geq \OPT - \ve v_{\max}$. 
\end{theorem}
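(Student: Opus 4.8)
The two basic facts are $\OPT \le V/n$, where $V := \sum_{j\in\calg}v_j$ (AM--GM applied to $v(\pi_1),\dots,v(\pi_n)$), and $f(\pi)\ge \min_i v(\pi_i)$. A naive approach is thus to compute an allocation in which every bundle value is at least $V/n-\ve v_{\max}$, which is a covering/target version of load balancing and is exactly what the additive scheme of Buchem et al.\ is built for. Two obstacles, both anticipated in the abstract, block this: (a) the naive goal is infeasible when some good has $v_j>V/n$, which forces a bundle above the average and pulls the load vector away from balance; and (b) when $v_{\max}\ll V/n$ the naive goal is far too weak, since the geometric mean of a nearly balanced load vector sits much closer to $V/n$ than its minimum does, and dropping to the minimum loses too much. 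The plan is therefore: (1) handle degenerate inputs; (2) preprocess by peeling off the handful of ``dominant'' goods and pinning down target bundle values for the rest; (3) upper-bound $\OPT$ by these targets; (4) meet the targets up to $\ve v_{\max}$ with the Buchem et al.\ scheme; (5) bound the resulting additive error of $f$.

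For (1): if $m<n$ then $\OPT=0$ and any allocation is optimal. Otherwise sort $v_1\ge\cdots\ge v_m$ and peel greedily: while the largest remaining good exceeds the average of the remaining goods over the remaining agents, remove it and dedicate one agent to it; say $\ell$ goods $v_1,\dots,v_\ell$ are peeled and the residual average is $A^\star$, so every unpeeled good has value at most $A^\star$ and $A^\star\le V/n$. Take the (already sorted) target vector $T=(v_1,\dots,v_\ell,A^\star,\dots,A^\star)$, noting $\sum_iT_i=V$. For (3) I would show $\OPT\le(\prod_iT_i)^{1/n}$: for an arbitrary allocation with sorted bundle values $L_{(1)}\ge\cdots\ge L_{(n)}$, goods $1,\dots,k$ occupy at most $k$ bundles so $\sum_{i\le k}L_{(i)}\ge v_1+\cdots+v_k$ for $k\le\ell$; and at least $n-\ell$ bundles contain none of the peeled goods and hence jointly hold value at most $\sum_{j>\ell}v_j=(n-\ell)A^\star$, whence (comparing against the $n-k$ smallest such bundles) $\sum_{i>k}L_{(i)}\le(n-k)A^\star$ for $k\ge\ell$. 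These two families of inequalities are precisely the statement that $T$ is majorized by $(L_{(i)})$, and since $x\mapsto\prod_ix_i$ is Schur-concave on the positive simplex, $\prod_iL_{(i)}\le\prod_iT_i$.

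For (4)--(5): assign each peeled good to its reserved agent and run the additive target-load-balancing scheme of Buchem et al.\ on the unpeeled goods and remaining agents (targeting $A^\star$), so that $v(\pi_i)\ge T_i-\ve v_{\max}$ for every $i$. The analytic bridge is the elementary inequality $\big(\prod_i(a_i-\delta)\big)^{1/n}\ge\big(\prod_i a_i\big)^{1/n}-\delta$ for positive $a_i\ge\delta$ (differentiate in $\delta$ and apply AM--GM): with $a_i=T_i$ and $\delta=\ve v_{\max}$ it yields $f(\pi)\ge(\prod_iT_i)^{1/n}-\ve v_{\max}\ge\OPT-\ve v_{\max}$, \emph{provided} each $T_i\ge\ve v_{\max}$ and the scheme genuinely returns $v(\pi_i)\ge T_i-\ve v_{\max}$; establishing this is the crux. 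When $\ell\ge1$ the dominant good gives $v_{\max}=v_1>V/n\ge A^\star$, so the slack on each light bundle is only a $(1-\ve)$ factor and the bound follows from $(\prod_iT_i)^{1/n}\le V/n<v_{\max}$; the delicate regime is $\ell=0$, where the common target $V/n$ need not be integrally realizable and one must instead read the scheme's guarantee against the spread of the \emph{optimal} load vector --- I would handle this by splitting on whether $v_{\max}$ is small relative to $V/n$ (then the geometric mean of loads each within $\ve v_{\max}$ of $V/n$ is $V/n-O(v_{\max}^2/(V/n))=V/n-O(\ve v_{\max})$) or comparable to it (then $\OPT=\Theta_\ve(v_{\max})$ and the target becomes an ordinary $(1-\Theta(\ve))$-approximation obtainable directly), refining the peeling and the targets as needed so that no target drops below $\ve v_{\max}$. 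I expect the main obstacle to be exactly this: reconciling the $\ve\,p_{\max}$-flavoured guarantee of the load-balancing scheme with the $\ve v_{\max}$ additive error demanded by the geometric-mean objective, and verifying uniformly that the error never exceeds $\ve v_{\max}$ --- the ``nontrivial amount of work'' promised by the abstract. The running time is dominated by the load-balancing scheme, $(nm/\ve)^{O(1/\ve)}$, with only $O(m\log m)$ spent on preprocessing.
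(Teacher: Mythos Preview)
Your preprocessing matches the paper's, and the majorization/Schur-concavity bound $\OPT \le (\prod_i T_i)^{1/n}$ is correct and elegant. The proof breaks, however, at the step ``run the additive target-load-balancing scheme \dots\ so that $v(\pi_i) \ge T_i - \ve v_{\max}$ for every $i$''. The target vector $T$ need not be even approximately realizable, so the scheme cannot be made to return bundles this close to $T$. Concretely, take $n=2$ with goods $\{1,1,1\}$: then $\ell = 0$, $A^\star = 3/2$, $T = (3/2, 3/2)$, and $(\prod_i T_i)^{1/2} = 3/2$; but every allocation has one bundle of value at most $1$, so $v(\pi_i) \ge 3/2 - \ve$ is impossible for $\ve < 1/2$, and indeed $\OPT = \sqrt{2}$ sits a fixed distance $3/2 - \sqrt{2} \approx 0.086\,v_{\max}$ below your upper bound. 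Hence the chain $f(\pi) \ge (\prod_i T_i)^{1/n} - \ve v_{\max} \ge \OPT - \ve v_{\max}$ cannot hold for small $\ve$. The same obstruction appears when $\ell \ge 1$ (e.g.\ $n=3$, $v_1 = 10$, remaining goods $\{1,1,1\}$), so it is not confined to the $\ell = 0$ regime you flag as delicate.

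Your case split does not close this gap. In the ``small $v_{\max}$'' case you need bundle values within $O(v_{\max})$ of $V/n$ to invoke the second-order estimate; the only feasible target interval you know \emph{a priori} is $[\mu - v_{\max}, \mu + v_{\max}]$, and after the scheme's $\ve v_{\max}$ slack the resulting bundles lie in $[\mu - (1+\ve)v_{\max}, \mu + (1+\ve)v_{\max}]$, giving a second-order error of $\Theta(v_{\max}^2/\mu)$ rather than $\Theta(\ve v_{\max})$ unless $v_{\max} = O(\ve \mu)$. Pushing the threshold that far forces the PTAS in the ``large $v_{\max}$'' branch to run with accuracy $\Theta(\ve^2)$, blowing the running time up to $(nm/\ve)^{O(1/\ve^2)}$. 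The paper sidesteps all of this: after preprocessing it \emph{guesses}, for each agent, the optimal bundle value to within $\ve v_{\max}$ from a grid of $2/\ve$ points inside $(\mu - v_{\max}, \mu + v_{\max})$ (so at most $(n+1)^{2/\ve}$ guesses and at most $2/\ve$ machine types). For the correct guess the optimal allocation is feasible, so the scheme returns bundles within $2\ve v_{\max}$ of the \emph{optimal} bundle values, and the additive error is then bounded by comparing to $\OPT$ directly (via an AM--GM/harmonic-mean argument and, for undoing the preprocessing, the mean-value theorem) rather than to the unattainable $(\prod_i T_i)^{1/n}$.
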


Recall that a PTAS for \textsc{Identical Additive NSW} is an algorithm that returns an allocation $\pi$ with $f(\pi) \geq \f{\OPT}{1 + \ve}$. Since $\f{\OPT}{1 + \ve} \approx (1-\ve)\OPT$, the additive error of a PTAS is roughly $\ve \OPT$, which can be much greater than $\ve v_{\max}$. Furthermore, as we will see in Proposition~\ref{prop:ratio}, our algorithm given in the proof of Theorem~\ref{main result} is also a PTAS. 
In this sense, we can say that our algorithm is better than a PTAS if we evaluate the approximation performance in a fine-grained way.  

We also note that there is no polynomial-time algorithm for
finding an allocation $\pi$ with $f(\pi) \geq \OPT - \ve$ unless ${\rm P}={\rm NP}$. 
This is because the additive error can be arbitrarily large by scaling the utility unless we obtain an optimal solution. 
Therefore, parameter $v_{\max}$ is necessary to make the condition scale-invariant.

\subsection{Related Work: Additive PTAS}

The algorithm in Theorem~\ref{main result} is called an \textit{additive PTAS} with parameter $v_{\max}$, and so our result has a meaning in a sense that it provides a new example of a problem for which an additive PTAS exists. In this subsection, we describe known results on additive PTASs, some of which are used in our argument later. 

An additive PTAS is a framework for approximation guarantees that was recently introduced by Buchem et al.~\cite{buchem2020additive,buchem_et_al:LIPIcs.ICALP.2021.42}. For any $\ve > 0$, an additive PTAS returns a solution whose additive error is at most $\ve$ times a certain parameter.

\begin{definition}
For an optimization problem, an {\em additive PTAS} is a family of polynomial-time algorithms $\{A_\ve \mid \ve > 0\}$ with the following condition: for any instance $I$ and for every $\ve > 0$, $A_\ve$ finds a solution with value $A_{\ve}(I)$ satisfying $\abs{A_{\ve}(I) - \OPT(I)} \leq \ve h$, where $h$ is a suitably chosen parameter of instance $I$ and $\OPT(I)$ is the optimal value.
\end{definition}

In some cases, an additive PTAS is immediately derived from an already known algorithm. For example, by setting the error factor appropriately, a fully polynomial-time approximation scheme (FPTAS) for the knapsack problem \cite{ibarra1975fast} is also an additive PTAS where the parameter is the maximum utility of a good. However, evaluating the additive error is difficult in general, and so additive PTASs are known for only a few problems. In the pioneering paper on additive PTASs by Buchem et al.~\cite{buchem2020additive,buchem_et_al:LIPIcs.ICALP.2021.42}, an additive PTAS was proposed for the completion time minimization scheduling problem, the Santa Claus problem, and the envy minimization problem. In order to derive these additive PTASs, they introduced the \textit{target load balancing problem} and showed that it is possible to determine whether a solution exists by only slightly violating the constraints. 

In the \textit{target load balancing problem}, we are given a set of jobs $\calj$ with a processing time $v_j > 0$ for each $j \in \calj$ and a set of machines $\calm$ with real values $l_i$ and $u_i$ for each $i \in \calm$. The goal is to assign each job $j \in \calj$ to a machine $i \in \calm$ such that for each machine $i \in \calm$ the load of $i$ (i.e., the sum of the processing times of the jobs assigned to $i$) is in the interval $[l_i,u_i]$.  In a similar way to \textsc{Identical Additive NSW}, an assignment is represented by a partition $\pi = (\pi_i)_{i \in \calm}$ of $\calj$. Let $v_{\max} = \max_{j \in \calj} v_j$ and let $K$ denote the number of types of machines, that is, $K = |\{ (l_i,u_i) \mid i \in \calm \}|$. While the target load balancing problem is NP-hard, Buchem et al.~\cite{buchem2020additive,buchem_et_al:LIPIcs.ICALP.2021.42} showed that it can be solved in polynomial time if we allow a small additive error and $K$ is a constant.

\begin{theorem}[Buchem et al.~{\cite[Theorem 12]{buchem2020additive}}] \label{Buchem}
For the target load balancing problem and for any $\ve >0$, there is an algorithm (called {\sc LoadBalancing}) that  either
\begin{enumerate}
 \item concludes that there is no feasible solution for a given instance, or
 \item returns an assignment $\pi = (\pi_i)_{i \in \calm}$ such that the total load $\sum_{j \in \pi_i} v_j$ is in $[l_i - \ve  v_{\max}, u_i + \ve v_{\max}]$ for each $i \in \calm$ 
\end{enumerate}
in $|\calm|^{K+1} (\f{|\calj|}{\ve})^{O(1/ \ve)}$ time. 
\end{theorem}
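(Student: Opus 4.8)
The plan is to adapt the classical ``large vs.\ small job'' framework behind scheduling approximation schemes so that it produces an \emph{additive} error of $\ve v_{\max}$ on every machine rather than a multiplicative one. After scaling all of $v_j$, $l_i$, $u_i$ by $1/v_{\max}$ we may assume $v_{\max}=1$, so the target becomes: find an assignment with load in $[l_i-\ve,\ u_i+\ve]$ for every $i$, or certify that no exact assignment exists. Call a job \emph{large} if $v_j>\ve$ and \emph{small} otherwise; a single small job can never move a load out of its window by more than $\ve$, and this is what ultimately supplies the additive slack. The first step is to round the sizes of the large jobs so that only a small number $t$ of distinct large sizes remain (e.g.\ via a geometric or arithmetic grid on $(\ve,1]$), while bookkeeping the error introduced. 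Here lies the first delicacy: a machine whose window $[l_i,u_i]$ sits far above $v_{\max}$ may receive very many large jobs, so per-job rounding errors must be prevented from accumulating past $\ve$. I would handle this by a preprocessing that first fills machines with very large windows using small jobs alone (greedily landing within $\ve$ of their targets), so that the machines still in play receive only $O(1/\ve)$ large jobs, and by choosing the grid fine enough that the total large-job rounding error on any remaining machine stays below a constant fraction of $\ve$; the tension between ``fine enough to control error'' and ``coarse enough that $t=O(1/\ve)$'' is the heart of the argument.

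Next I would place the large jobs. Since the behavior of a machine depends only on its type and there are only $K$ of them, describe the large jobs on one machine by a \emph{configuration} --- the multiplicity vector over the $t$ rounded large sizes --- and set up a dynamic program / integer program deciding, for each machine type, how many of its machines use each configuration, subject to matching the number $m_k\le|\calm|$ of machines of type $k$, having every used configuration fit (rounded size at most $u_k+\ve$, so that small jobs can finish the job later), and consuming every large job. Enumerating the per-type counts contributes the $|\calm|^{K+1}$ factor, and the number of configurations contributes the $(|\calj|/\ve)^{O(1/\ve)}$ factor; if no valid choice exists, already output ``no feasible solution''.

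Finally, distribute the small jobs. Fix a successful large-job placement; each machine $i$ now has a residual window $[l_i',u_i']$ (its original window minus its current large load) into which its small-job load should fall, up to the $\ve$ slack. Write the transportation LP that assigns the small jobs fractionally so that every machine's fractional small load lies in $[l_i',u_i']$. If, over all large-job placements considered, this LP is always infeasible, output ``no feasible solution''; otherwise take a vertex solution --- whose support is a forest --- and round it to an integral assignment along each tree, which changes any machine's small load by strictly less than the largest small-job size, hence by less than $\ve$. Combined with the bounded large-job error, every machine's final load lies in $[l_i-\ve,\ u_i+\ve]$, i.e.\ in $[l_i-\ve v_{\max},\ u_i+\ve v_{\max}]$ in the original scaling; and a genuine feasible assignment, after the same large-job rounding, would yield a valid configuration profile and a feasible transportation LP, so the negative answer is never given in error.

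The step I expect to be the main obstacle is the interplay of the first two parts: simultaneously making the large jobs few-per-machine \emph{and} few-in-type while keeping their accumulated rounding error below $\ve$ on every machine, which is exactly what lets the configuration enumeration run within $(|\calj|/\ve)^{O(1/\ve)}$ time. By contrast, the small-job rounding is essentially classical (integrality of the transportation/bipartite-flow polytope, in the spirit of Birkhoff--von Neumann), and the $|\calm|^{K+1}$-time enumeration over the $K$ machine types is routine once the configurations are set up.
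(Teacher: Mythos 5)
First, a point of context: the paper does not prove this statement at all --- Theorem~\ref{Buchem} is imported verbatim from Buchem et al.~\cite[Theorem 12]{buchem2020additive} and used purely as a black-box subroutine, so there is no internal proof to compare yours against; your sketch has to be judged on its own as a reconstruction of their argument.

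Judged that way, it has a genuine gap at exactly the place you flag as the heart of the matter. The windows $[l_i,u_i]$ are arbitrary relative to $v_{\max}$, so a machine may be \emph{forced} to take far more than $O(1/\ve)$ large jobs; in particular there may be no small jobs at all (e.g.\ every $v_j>\ve v_{\max}$ while some $l_i \approx 100\, v_{\max}$), so your preprocessing step ``fill the machines with very large windows using small jobs alone'' is vacuous and does not bring the remaining machines down to $O(1/\ve)$ large jobs. Once a machine can carry $N\gg 1/\ve$ large jobs, independent per-job rounding to $t$ distinct sizes in $(\ve v_{\max}, v_{\max}]$ has granularity about $v_{\max}/t$, hence accumulated error up to $N v_{\max}/t$ on that machine; keeping this below a constant fraction of $\ve v_{\max}$ forces $t=\Omega(N/\ve)$, and then the configuration enumeration is no longer within $(\abs{\calj}/\ve)^{O(1/\ve)}$. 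So the tension you yourself identify is not resolved by the sketch, and it is precisely where Buchem et al.\ depart from the classical large/small scheduling framework: their argument controls the total per-machine deviation globally rather than by independent per-job grid rounding. Two secondary issues: with two-sided targets, rounding a vertex of the small-job transportation LP is not automatically within one small job per machine --- in the support forest a machine can be incident to several fractional small jobs and lose all of those fractions, so the claimed bound needs a Shmoys--Tardos-type rounding adapted to lower bounds, not just integrality of the polytope; and the running-time bookkeeping is off, since enumerating, for each of the $K$ types, how many of its machines use each configuration is not an $\abs{\calm}^{K+1}$-size enumeration when the number of configurations is $(\abs{\calj}/\ve)^{\Theta(1/\ve)}$.
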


Note that the algorithm in this theorem is used as a subroutine in our additive PTAS for {\sc Identical Additive NSW}. 
Note also that the term ``assignment'' is used in this theorem by following the convention, but it just means a partition of the jobs. 
Therefore, we do not distinguish ``assignment'' and ``allocation'' in what follows in this paper.

\subsection{Technical Highlights}
\label{sec:highlights}

In this subsection, we describe the outline of our algorithm for {\sc Identical Additive NSW} and explain two technical issues that are peculiar to additive errors. 

The basic strategy of our algorithm is simple; we guess the valuation $v(\pi^*_i)$ of each agent $i$ in an optimal solution $\pi^*$, and then seek for an allocation $\pi$ such that $|v(\pi_i) - v(\pi^*_i)| \le \ve  v_{\max}$ for each $i \in \cala$ by using {\sc LoadBalancing} in Theorem~\ref{Buchem}. 

The first technical issue is that even if the additive error of $v(\pi_i)$ is at most $\ve  v_{\max}$ for each $i \in \cala$, the additive error of $f(\pi)$ is not easily bounded by $\ve  v_{\max}$. This is in contrast to the case of the multiplicative error (i.e., if $v(\pi_i) \ge v(\pi^*_i) / (1+\ve)$ for each $i \in \cala$, then $f(\pi) \ge f(\pi^*) / (1+\ve)$). The first technical ingredient in our proof is to bound the additive error of $f(\pi)$ under the assumption that $v_{\max}$ is at most the average valuation of the agents; see Lemma~\ref{preadditive} for a formal statement. In order to apply this argument, we modify a given instance so that $v_{\max}$ is at most the average valuation of the agents by a naive preprocessing. In the preprocessing, we assign a good $j$ with high utility to an arbitrary agent $i$ and remove $i$ and $j$ from the instance, repeatedly (see Section~\ref{sec:preprocessing} for details).

The second technical issue is that the preprocessing might affect the additive error of the output, whereas it does not affect the optimal solutions of the instance (see Lemma~\ref{preprocessing}). Suppose that an instance $I$ is converted to an instance $I'$ by the preprocessing, and suppose also that we obtain an allocation $\pi'$ for $I'$. Then, by recovering the agents and the goods removed in the preprocessing, we obtain an allocation $\pi$ for $I$ from $\pi'$. The issue is that the additive error of the objective function value might be amplified by this recovering process, which makes the evaluation of the additive error of $f(\pi)$ hard. Nevertheless, we show that the additive error of $f(\pi)$ is bounded by $O(\ve  v_{\max})$ with the aid of the mean-value theorem for differentiable functions (see Proposition~\ref{prop:performanceMNW}), which is the second technical ingredient in our proof. It is worth noting that the differential calculus plays a crucial role in the proof, whereas the problem setting is purely combinatorial.  

The remaining of this paper is organized as follows. In Section~\ref{sec:description}, we describe our algorithm for {\sc Identical Additive NSW}. Then, in Section~\ref{sec:analysis}, we show its approximation guarantee and prove Theorem~\ref{main result}.

\section{Description of the Algorithm}
\label{sec:description}

As we mentioned in Section~\ref{sec:highlights}, in our algorithm, we first apply a preprocessing so that $v_{\max}$ is at most the average valuation of the agents. Then, we guess the valuation of each agent in an optimal solution, and then seek for an allocation that is close to the optimal solution by using {\sc LoadBalancing}, which is the main procedure. We describe the preprocessing and the main procedure in Sections~\ref{sec:preprocessing} and \ref{sec:mainproc}, respectively. 

\subsection{Preprocessing}
\label{sec:preprocessing}

Consider an instance $I = (\cala , \calg, \mathbf{v})$ of {\sc Identical Additive NSW} where $\mathbf{v} =( v_1, \dots , v_m )$. 
If $|\cala| > |\calg|$, then the optimal value is zero, and hence any allocation is optimal. 
Therefore, we may assume that $|\cala| \le |\calg|$, which implies that the optimal value is positive. 
Let $\mu(I)$ be the average valuation of agents, that is, $\mu(I) = \f{1}{| \cala |} \sum_{j \in \calg} v_j$. When $I$ is obvious, we simply write $\mu$ for $\mu(I)$.
The objective of the preprocessing is to modify a given instance so that $v_j < \mu$ for any $j \in \calg$. 

Our preprocessing immediately follows from the fact that an agent who receives a valuable good does not receive other goods in an optimal solution. 
Although similar observations were shown in previous papers (see e.g.~\cite{alon1998approximation,nguyen2014minimizing}), we give a proof for completeness. 

\begin{lemma}\label{preprocessing}
Let $j \in \calg$ be an item with $v_j \geq \mu$. In an optimal solution $\pi^*$, an agent who receives $j$ cannot receive any goods other than $j$. 
\end{lemma}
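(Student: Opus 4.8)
The plan is to argue by contradiction using a local exchange argument. Suppose $\pi^*$ is optimal and some agent, say agent $1$, receives good $j$ together with at least one other good; let $S = \pi^*_1$ with $j \in S$ and $S \sm \{j\} \neq \emptyset$. Since we are in the case $|\cala| \le |\calg|$, the optimal value is positive, so $v(\pi^*_i) > 0$ for every agent $i$ (otherwise $f(\pi^*) = 0$); in particular $f(\pi^*)$ is a product of strictly positive terms, and any modification that strictly increases one factor while not decreasing the others strictly increases $f$.

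The key step is to choose a second agent to move goods to. Because $v(\pi^*_1) \ge v_j \ge \mu = \f{1}{|\cala|}\sum_{j' \in \calg} v_{j'}$, agent $1$'s valuation is at least the average; hence there exists an agent $k \ne 1$ with $v(\pi^*_k) \le \mu \le v(\pi^*_1)$. (Here I use that agent $1$ alone already accounts for at least the average, so the remaining $|\cala|-1$ agents share at most $(|\cala|-1)\mu$ among them, forcing one of them to be at most $\mu$; one should be slightly careful that $|\cala| \ge 2$, which holds since $S$ has at least two goods and $|\cala|\le|\calg|$ — actually if $|\cala|=1$ the statement is vacuous, so assume $|\cala|\ge 2$.) Now form a new allocation $\pi$ from $\pi^*$ by moving every good in $S \sm \{j\}$ from agent $1$ to agent $k$, leaving all other agents untouched. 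Then $v(\pi_1) = v_j$, $v(\pi_k) = v(\pi^*_k) + v(S \sm \{j\})$, and $v(\pi_i) = v(\pi^*_i)$ for $i \notin \{1,k\}$.

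It remains to show $f(\pi) > f(\pi^*)$, i.e. $v(\pi_1) v(\pi_k) > v(\pi^*_1) v(\pi^*_k)$ since the other factors are unchanged. Writing $a = v_j$, $b = v(S \sm \{j\}) > 0$, $c = v(\pi^*_k)$, we have $v(\pi^*_1) = a + b$, and we must check $a(c + b) > (a+b)c$, equivalently $ab > bc$, equivalently $a > c$. Since $a = v_j \ge \mu \ge v(\pi^*_k) = c$, we get $a \ge c$; if the inequality is strict we are done. The delicate case $a = c$ needs a touch more care: then $v_j = \mu$ and $v(\pi^*_k) = \mu$, and the exchange keeps the product the same, so it does not immediately contradict optimality — but I can instead move just one good of $S \sm \{j\}$, or note that repeating/adjusting the argument, or simply observe that the statement as used later only needs that there is \emph{an} optimal solution with this property, so ties can be broken. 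The cleanest fix is to pick $k$ with $v(\pi^*_k)$ minimum; if even the minimum equals $\mu$ then all agents have valuation exactly $\mu$ and $v_j = \mu$, so moving $S\sm\{j\}$ to agent $k$ gives product $\mu \cdot 2\mu > \mu \cdot \mu$ wait — recompute: then $a = \mu$, $c = \mu$, $b > 0$, and $a(c+b) = \mu(\mu + b) > \mu^2 = (a+b)c$ fails since $(a+b)c = (\mu+b)\mu = \mu^2 + b\mu$. So indeed when $a = c$ the product is unchanged. I expect the main obstacle to be handling this boundary case $v_j = \mu$; the resolution is either to strengthen the hypothesis to $v_j > \mu$ in the places it is invoked, or to argue that among all optimal solutions one with the minimum number of "large-good" agents holding extra goods must in fact have none (a standard minimal-counterexample refinement), and I would present the proof with the strict inequality $v_j \ge \mu$ handled by the exchange and remark that equality is dealt with by this minimality choice.
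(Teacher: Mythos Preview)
Your exchange argument is the right idea and is essentially the same strategy as the paper's, but the ``delicate case'' you struggle with is a phantom created by a slack inequality earlier in your argument. You wrote $v(\pi^*_1) \ge v_j \ge \mu$, but in fact $v(\pi^*_1) = v_j + v(S\setminus\{j\}) > v_j \ge \mu$ strictly, since $S\setminus\{j\}\neq\emptyset$ and all utilities are positive. Hence some agent $k$ must satisfy $v(\pi^*_k) < \mu$ strictly, so $c < \mu \le a$ and $a>c$ always holds. Your exchange (moving all of $S\setminus\{j\}$ to $k$) then gives a strict improvement with no boundary case to worry about; the minimal-counterexample detour and the suggestion to strengthen the hypothesis are unnecessary.

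For comparison, the paper uses the same strict inequality $v(\pi^*_i)>\mu$ to find $k$ with $v(\pi^*_k)<\mu$, but moves a \emph{single} extra good $l\in\pi^*_i$ to $k$ rather than the whole bundle. The resulting product difference is $v_l\bigl(v(\pi^*_i)-v(\pi^*_k)-v_l\bigr) \ge v_l\bigl(v_j - v(\pi^*_k)\bigr) \ge v_l\bigl(\mu - v(\pi^*_k)\bigr) > 0$. Either variant works once the strict inequality is in place.
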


\begin{proof}
Assume to the contrary that $\pi^*$ is an optimal solution which assigns goods $j$ and $l$ to the same agent $i \in \cala$.
This implies $v( \pi^*_i ) > \mu$ and so there must be at least one agent $k\in \cala$ such that $v( \pi^*_k ) < \mu$. Let $\pi'$ be the new allocation obtained from $\pi^*$ by reassigning good $l$ to agent $k$. Then, we obtain
\begin{align*}
    v( \pi'_i ) v( \pi'_k ) - v( \pi^*_i ) v( \pi^*_k ) &= (v( \pi^*_i ) - v_l)(v( \pi^*_k )+ v_l )- v( \pi^*_i ) v( \pi^*_k ) \\
    &= v_l(v( \pi^*_i ) - v( \pi^*_k ) -v_l  ) \\
    &\geq v_l(v_j - v( \pi^*_k ) ) \\
    &\geq v_l( \mu - v( \pi^*_k ) ) \\
    &> 0.
\end{align*}
This implies that $f(\pi') > f(\pi^*)$, which contradicts the optimality of $\pi^*$. 
\end{proof}

For $\cala_0 \subseteq \cala$ and $\calg_0 \subseteq \calg$, let $I \sm (\cala_0,\calg_0)$ denote the instance obtained from $I$ by removing $\cala_0$ and $\calg_0$, that is, $I \sm (\cala_0,\calg_0) = (\cala \sm \cala_0, \calg \sm \calg_0, \mathbf{v} \sm \calg_0)$, where $\mathbf{v} \sm \calg_0 = (v_j)_{j \in \calg \sm \calg_0}$.
In the preprocessing, we assign a good $j \in \calg$ with $v_j \geq \mu$ to some agent $i \in \cala$ and remove $i$ and $j$ from the instance, repeatedly. A formal description is shown in Algorithm~\ref{alg:01}.

\begin{algorithm}[H]
\caption{ {\sc Preprocessing}}
\label{alg:01}
\begin{algorithmic}[1]
\Require instance $I = ( \cala , \calg , \mathbf{v} )$ where $\mathbf{v} = ( v_1, v_2 , \dots , v_m )$
\Ensure subsets $\cala_0 \subseteq \cala$ and $\calg_0 \subseteq \calg$
\State Initialize $\cala_0$ and $\calg_0$ as $\cala_0 = \calg_0 = \emptyset$.
\While{there exists a good $j \in \calg \sm \calg_0$ with $v_j \geq \mu(I \sm (\cala_0,\calg_0))$}
    \State $\calg_0 \leftarrow \calg_0 \cup \{j\}$
    \State Choose $i \in \cala \sm \cala_0$ arbitrarily and add it to $\cala_0$.
\EndWhile
\State \Return $\cala_0$, $\calg_0$
\end{algorithmic}
\end{algorithm}

Let $\cala_0$ and $\calg_0$ be the output of {\sc Preprocessing}. 
Lemma~\ref{preprocessing} shows that if we obtain an optimal solution for $I \sm (\cala_0,\calg_0)$, then 
we can immediately obtain an optimal solution for $I$ by assigning each good in $\calg_0$ to each agent in $\cala_0$. 
Note that the inequality $v_j < \mu(I \sm (\cala_0,\calg_0))$ holds for all $j \in \calg \sm \calg_0$ after the preprocessing. 
Thus, the maximum utility of a good is less than the average valuation of the agents in the instance $I \sm (\cala_0,\calg_0)$. 
Note also that, since the number of while loop iterations is at most $|\cala|$, {\sc Preprocessing} runs in polynomial time.

\subsection{Main Procedure}
\label{sec:mainproc}

We describe the main part of the algorithm, in which we guess the valuation of each agent in an optimal solution and then apply {\sc LoadBalancing}. 
In order to obtain a polynomial-time algorithm, we have the following difficulties: 
the number of guesses has to be bounded by a polynomial and 
the number of machine types $K$ has to be a constant when we apply {\sc LoadBalancing}. 
To overcome these difficulties, we get good upper and lower bounds on the valuation of each agent in an optimal solution, which is a key observation in our algorithm. 
We prove the following lemma by tracing the proof of Lemma~\ref{preprocessing}.

\begin{lemma} \label{limit the range}
For any instance $I$ of {\sc Identical Additive NSW}, let $\pi^*$ be an optimal allocation of $I$. Then, 
\[
\mu - v_{\max} < v( \pi^*_i ) < \mu + v_{\max}
\]
holds for any $i \in \cala$. 
\end{lemma}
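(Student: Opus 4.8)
The plan is to prove the two inequalities separately by the same exchange-argument technique used in Lemma~\ref{preprocessing}, comparing $\pi^*$ with a modified allocation that moves a single good between two agents. First I would establish the lower bound $v(\pi^*_i) > \mu - v_{\max}$. Suppose for contradiction that some agent $i$ has $v(\pi^*_i) \le \mu - v_{\max}$. Since the average valuation is $\mu$, there must exist an agent $k$ with $v(\pi^*_k) \ge \mu$, and in particular $\pi^*_k$ is nonempty, so pick any good $l \in \pi^*_k$. Note $v_l \le v_{\max}$. Form $\pi'$ by moving $l$ from $k$ to $i$. The only factors of $f$ that change are those for $i$ and $k$, so it suffices to show $v(\pi'_i)v(\pi'_k) - v(\pi^*_i)v(\pi^*_k) > 0$. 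Expanding exactly as in Lemma~\ref{preprocessing},
\[
v(\pi'_i)v(\pi'_k) - v(\pi^*_i)v(\pi^*_k) = v_l\bigl(v(\pi^*_k) - v(\pi^*_i) - v_l\bigr) \ge v_l\bigl(\mu - (\mu - v_{\max}) - v_{\max}\bigr) = 0,
\]
which is only $\ge 0$, not strictly positive. So the strict inequality needs a little care: if $v(\pi^*_i) < \mu - v_{\max}$ strictly, or $v(\pi^*_k) > \mu$ strictly, or $v_l < v_{\max}$ strictly, we get a strict gain and a contradiction. The remaining boundary case — $v(\pi^*_i) = \mu - v_{\max}$, $v(\pi^*_k) = \mu$, and $v_l = v_{\max}$ — forces $f(\pi') = f(\pi^*)$, so $\pi'$ is also optimal; but now agent $i$ in $\pi'$ has valuation $\mu$, hence holds a good of utility $v_{\max} \ge \mu$ together with whatever was already in $\pi^*_i$ (which is nonempty since its valuation was positive — recall we may assume $|\cala| \le |\calg|$, or simply note $\mu - v_{\max}$ could be $0$, in which case $\pi^*_i = \emptyset$ and we instead observe $v(\pi'_i) = v_{\max} \ge \mu$ forces, by Lemma~\ref{preprocessing} applied to $\pi'$, that agent $i$ receives only that one good, consistent with $\pi^*_i = \emptyset$; to get a genuine contradiction here I would instead move a second good or argue that $v_{\max} \ge \mu$ means the preprocessing would have fired). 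I would handle this by first disposing of the degenerate possibility $v_{\max} \ge \mu$ at the very start: if $v_{\max} \ge \mu$ then $\mu - v_{\max} \le 0 < v(\pi^*_i)$ trivially (valuations are positive), so the lower bound holds for free; hence we may assume $v_{\max} < \mu$, and then the boundary case above cannot occur because it required $v_l = v_{\max}$ and $v(\pi^*_k) = \mu$, while the strict-gain analysis already covers everything else.

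For the upper bound $v(\pi^*_i) < \mu + v_{\max}$, I would argue symmetrically. Again the case $v_{\max} \ge \mu$ is not actually special here, so suppose some agent $i$ has $v(\pi^*_i) \ge \mu + v_{\max}$. Then $\pi^*_i$ contains at least two goods (since each good has utility $\le v_{\max} < \mu + v_{\max} \le v(\pi^*_i)$, a single good cannot account for this valuation), and there is an agent $k$ with $v(\pi^*_k) \le \mu$. Pick $l \in \pi^*_i$ and move it from $i$ to $k$, obtaining $\pi'$. Then
\[
v(\pi'_i)v(\pi'_k) - v(\pi^*_i)v(\pi^*_k) = v_l\bigl(v(\pi^*_i) - v(\pi^*_k) - v_l\bigr) \ge v_l\bigl((\mu + v_{\max}) - \mu - v_{\max}\bigr) = 0,
\]
and the same strict-versus-boundary discussion applies: we get a strict improvement unless $v(\pi^*_i) = \mu + v_{\max}$, $v(\pi^*_k) = \mu$, and $v_l = v_{\max}$. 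In that boundary case $\pi'$ is also optimal and agent $i$ still has $v(\pi'_i) = \mu$ while $\pi'_i$ still contains at least one good besides... — to close this cleanly I would instead note that in the boundary case we can repeat the exchange: after moving $l$, agent $i$ has valuation $\mu$ and $\pi'_i$ is nonempty with every remaining good of utility $< v_{\max}$ unless it is again $v_{\max}$; iterating, we eventually reach an allocation where some agent has valuation $\mu$ and holds a good together with strictly positive other goods, contradicting Lemma~\ref{preprocessing} since a good of utility $v_{\max}$ with $v_{\max} \ge \mu$... This is getting delicate, so the honest structure is: prove the non-strict bounds $\mu - v_{\max} \le v(\pi^*_i) \le \mu + v_{\max}$ by the displayed exchange computations, then upgrade to strict inequalities by observing that equality on either side forces $f(\pi') = f(\pi^*)$ with $v_l = v_{\max}$ and one of $v(\pi^*_i), v(\pi^*_k)$ equal to $\mu$, and then invoking Lemma~\ref{preprocessing} on the optimal allocation $\pi'$ to derive a contradiction with the good of utility $v_{\max}$ being co-located with other goods.

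The main obstacle, as the above makes clear, is not the core exchange inequality — that is a one-line computation copied from Lemma~\ref{preprocessing} — but the boundary analysis needed to turn $\le$ into $<$. I expect the cleanest route is exactly the one the authors hint at (``by tracing the proof of Lemma~\ref{preprocessing}''): after establishing the weak inequalities, assume equality, note that then the exchanged allocation $\pi'$ is still optimal, and apply Lemma~\ref{preprocessing} to $\pi'$; the good moved had utility $v_{\max}$, and in the equality case the receiving (resp. losing) agent ends up with valuation exactly $\mu$, i.e. that agent holds a good of value $v_{\max} \ge$ (its share) together with others — more precisely, one shows the agent whose valuation becomes $\mu$ is holding the just-moved good of value $v_{\max}$ alongside at least one other good, and since $v_{\max} \ge \mu$ is false in the non-degenerate regime we must have $v_{\max} < \mu$, but then $v(\pi^*_i) \ge \mu + v_{\max} > 2 v_{\max}$ guarantees $\pi^*_i$ had $\ge 2$ goods so after removing one it still has $\ge 1$, and the agent receiving it now has value $\mu \le$ something — I would work out the precise contradiction in the write-up, but the mechanism is: an optimal allocation cannot have an agent holding a maximum-utility good together with any other good once valuations are pushed to the boundary, by the strict inequality in Lemma~\ref{preprocessing}'s proof. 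If this boundary argument proves fussier than expected, an acceptable fallback is to state the lemma with non-strict inequalities $\mu - v_{\max} \le v(\pi^*_i) \le \mu + v_{\max}$, which is all that is actually needed for bounding the number of guesses and the number of machine types in the main procedure.
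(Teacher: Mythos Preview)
Your exchange argument is exactly the paper's approach, but you make the strict-inequality step needlessly hard by choosing the partner agent $k$ with only a weak bound $v(\pi^*_k) \ge \mu$ (resp.\ $\le \mu$). The paper instead observes that since $v_{\max} > 0$, the hypothesis $v(\pi^*_i) \ge \mu + v_{\max}$ already gives $v(\pi^*_i) > \mu$ \emph{strictly}, so by averaging there must exist $k$ with $v(\pi^*_k) < \mu$ \emph{strictly}. With that strict inequality in hand, moving any $j \in \pi^*_i$ to $k$ yields
\[
v_j\bigl(v(\pi^*_i) - v(\pi^*_k) - v_j\bigr) > v_j\bigl(v(\pi^*_i) - \mu - v_{\max}\bigr) \ge 0,
\]
an immediate contradiction with no boundary case to chase. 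The lower bound is symmetric: $v(\pi^*_i) \le \mu - v_{\max}$ forces $v(\pi^*_i) < \mu$, hence some $k$ has $v(\pi^*_k) > \mu$ strictly; then for any $j \in \pi^*_k$ one has $v_j + v(\pi^*_i) \le v_{\max} + (\mu - v_{\max}) = \mu < v(\pi^*_k)$, so $v_j(v(\pi^*_k) - v(\pi^*_i) - v_j) > 0$ directly. Once $k$ is chosen strictly, your entire boundary discussion, the case split on whether $v_{\max} \ge \mu$, the iterated exchange, the appeal to Lemma~\ref{preprocessing}, and the fallback to non-strict bounds all become unnecessary.
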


\begin{proof}

Assume that there is some agent $i \in \cala$ with $v( \pi^*_i ) \geq \mu + v_{\max}$. This implies that $v( \pi^*_i ) > \mu$ and so there must be at least one agent $k \in \cala$ such that $v( \pi^*_k ) < \mu$. Let $j \in \calg$ be a good assigned to $i$. By reassigning $j$ to agent $k$, we get a new allocation $\pi'$ from $\pi^*$. Then, we obtain
\begin{align*}
    v( \pi'_i ) v( \pi'_k ) - v( \pi^*_i ) v( \pi^*_k ) &= (v( \pi^*_i ) - v_j)(v( \pi^*_k )+ v_j )- v( \pi^*_i ) v( \pi^*_k ) \\
    &= v_j(v( \pi^*_i ) - v( \pi^*_k ) -v_j  ) \\
    &> v_j(v( \pi^*_i ) - \mu - v_{\max}  ) \\
    &\geq 0.
\end{align*}
This shows that $f(\pi') > f(\pi^*)$, which contradicts the optimality of $\pi^*$. 

Assume that there is some agent $i \in \cala$ with $v( \pi^*_i ) \leq \mu - v_{\max}$. This implies that $v( \pi^*_i ) < \mu$, and so there must be at least one agent $k \in \cala$ such that $v( \pi^*_k ) > \mu$. Let $j \in \calg$ be a good assigned to $k$. Then, 
it holds that
\[
    v_j + v( \pi^*_i ) \leq v_{\max} + v( \pi^*_i ) \leq \mu < v( \pi^*_k ).
\]
By reassigning $j$ to agent $i$, we get a new allocation $\pi'$ from $\pi^*$. Then, we obtain
\begin{align*}
    v( \pi'_i ) v( \pi'_k ) - v( \pi^*_i ) v( \pi^*_k ) &= (v( \pi^*_i ) + v_j)(v( \pi^*_k ) - v_j )- v( \pi^*_i ) v( \pi^*_k ) \\
    &= v_j(v( \pi^*_k ) - v( \pi^*_i ) -v_j  ) \\
    &> 0.
\end{align*}
This shows that $f(\pi') > f(\pi^*)$, which contradicts the optimality of $\pi^*$.
\end{proof}

We are now ready to describe our algorithm. Suppose we are given an instance $I = (\cala , \calg, \mathbf{v})$ with $v_{\max} < \mu$. 
To simplify the description, suppose that $1/\ve$ is an integer. 

Our idea is to guess $v(\pi^*_i)$ with an additive error $\ve v_{\max}$ for each $i \in \cala$, where $\pi^*$ is an optimal solution.
By Lemma \ref{limit the range}, we already know that the value of an optimal solution is in the interval of width $2 v_{\max}$. Let $L$ be the set of points delimiting this interval with width $\ve v_{\max}$, that is,
\[
 L = \{ \mu - v_{\max} + i  \ve  v_{\max} \mid i \in \{0, 1, 2, \dots , 2/\ve -1\} \}.    
\]
Let $L^{\cala}$ be the set of all the maps from $\cala$ to $L$. 
For $\tau, \tau' \in L^{\cala}$, we denote $\tau \sim \tau'$ if $\tau'$ is obtained from $\tau$ by changing the roles of the agents, or equivalently 
$|\{i \in \cala \mid \tau(i)=x\}| = |\{i \in \cala \mid \tau'(i)=x\}|$ for each $x \in L$. 
In such a case, since each agent is identical, we can identify $\tau$ and $\tau'$. 
This motivates us to define $D := L^{\cala} / \sim$, where $\sim$ is the equivalence relation defined as above. 

For each $\tau\in D$, we apply {\sc LoadBalancing} in Theorem~\ref{Buchem} to the following instance of the target load balancing problem: 
$\calm := \cala$, $\calj := \calg$, the processing time of $j \in \calj$ is $v_j$, and
the target interval is $[\tau(i), \tau(i) + \ve v_{\max}]$ for each $i \in \calm$. 
Then, {\sc LoadBalancing} either concludes that no solution exists or returns an assignment (allocation) $\pi^\tau$ 
such that $v(\pi^\tau_i) \in [\tau(i) - \ve v_{\max}, \tau(i) + 2 \ve v_{\max}]$ for each $i \in \calm$. 

Among all solutions $\pi^\tau$ returned by {\sc LoadBalancing}, our algorithm chooses an allocation with the largest objective function value. 
A pseudocode of our algorithm is shown in Algorithm~\ref{alg:02}.

\begin{algorithm}[H]
\caption{{ \sc MainProcedure}}
\label{alg:02}
\begin{algorithmic}[1]
\Require instance $I = (\cala , \calg, \mathbf{v})$ such that $v_{\max} < \mu$
\Ensure allocation $\pi$
\State Initialize $\pi$ as an arbitrary allocation.
\For{$\tau \in D$}
    \State Apply {\sc LoadBalancing} with the target interval $[\tau(i), \tau(i) + \ve v_{\max} ]$ for $i \in \cala$.
    \If{{\sc LoadBalancing} returns an allocation $\pi^{\tau}$}
        \If{ $f(\pi) < f(\pi^{\tau})$}
            \State $\pi \leftarrow \pi^{\tau}$
        \EndIf
    \EndIf
\EndFor
\State \Return $\pi$
\end{algorithmic}
\end{algorithm}

\begin{proposition}\label{runningtime}
The running time of {\sc MainProcedure} is $(nm/\ve)^{O(1/\ve)} $.
\end{proposition}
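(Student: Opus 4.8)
The plan is to bound separately the number of iterations of the main loop (i.e.\ the size of $D$) and the cost of each iteration (one call to {\sc LoadBalancing}), and then multiply. First I would bound $|D|$. The set $L$ has exactly $2/\ve$ elements, so $L^{\cala}$ has $(2/\ve)^n$ elements; since $D = L^{\cala}/\!\sim$ is a quotient of $L^{\cala}$, we trivially get $|D| \le (2/\ve)^n$. This is already polynomial in $(nm/\ve)^{O(1/\ve)}$? No --- $(2/\ve)^n$ is exponential in $n$, so a cruder count is not enough. The right estimate is to note that, up to the equivalence $\sim$, an element of $D$ is determined by the multiset of values it assigns, i.e.\ by the tuple $(c_x)_{x \in L}$ of nonnegative integers with $\sum_{x \in L} c_x = n$, where $c_x = |\{i : \tau(i) = x\}|$. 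The number of such tuples is $\binom{n + |L| - 1}{|L| - 1} \le (n+1)^{|L|} = (n+1)^{2/\ve}$. Hence $|D| \le (n+1)^{2/\ve} = (nm/\ve)^{O(1/\ve)}$.

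Next I would bound the cost of one iteration. In each iteration we invoke {\sc LoadBalancing} on the target load balancing instance with $\calm = \cala$, $\calj = \calg$, and target intervals $[\tau(i), \tau(i)+\ve v_{\max}]$. By Theorem~\ref{Buchem}, this runs in $|\calm|^{K+1}(|\calj|/\ve)^{O(1/\ve)}$ time, where $K$ is the number of distinct machine types, i.e.\ the number of distinct intervals $[\tau(i),\tau(i)+\ve v_{\max}]$ over $i \in \cala$. Since the left endpoints $\tau(i)$ lie in $L$ and $|L| = 2/\ve$, we have $K \le 2/\ve$, which is a constant once $\ve$ is fixed. Therefore each call costs at most $n^{2/\ve + 1}(m/\ve)^{O(1/\ve)} = (nm/\ve)^{O(1/\ve)}$ time. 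Comparing $f(\pi)$ with $f(\pi^\tau)$ and the bookkeeping in each iteration costs only $O(n)$ additional time, which is absorbed.

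Multiplying the two bounds, the total running time of {\sc MainProcedure} is at most $(n+1)^{2/\ve} \cdot (nm/\ve)^{O(1/\ve)} = (nm/\ve)^{O(1/\ve)}$, as claimed. The only mildly delicate point is to make sure the machine-type count $K$ in the application of Theorem~\ref{Buchem} is genuinely bounded by $|L| = 2/\ve$ rather than by $n$; this is exactly the reason the guesses were constrained to take values in the size-$(2/\ve)$ grid $L$, and it is what keeps the exponent of $n$ a constant (depending only on $\ve$) instead of growing with the input. Everything else is routine arithmetic with the bounds supplied by Theorem~\ref{Buchem}.
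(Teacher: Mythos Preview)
Your proof is correct and follows essentially the same approach as the paper: bound $|D|$ by observing that an equivalence class is determined by the counts $(c_x)_{x\in L}$, giving $|D|\le (n+1)^{2/\ve}$, and bound each call to {\sc LoadBalancing} via Theorem~\ref{Buchem} using $K\le |L|=2/\ve$. The paper's proof is slightly terser but identical in substance.
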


\begin{proof}
To obtain an upper bound on the number of for loop iterations, we estimate the number of elements in $D$. 
Since each $\tau\in D$ is determined by the number of agents $i\in \cala$ such that $\tau(i) = x$ for $x\in L$, we obtain $|D| \leq |\{0, 1, \dots , n\}|^L \leq (n+1)^{2/\ve} = n^{O(1/\ve)}$.

We next estimate the running time of {\sc LoadBalancing}. 
Since $|\calm| = n$, $|\calj| = m$, and the number of machine types $K$ is at most $|L| =  2/\ve $, the running time of {\sc LoadBalancing} is $n^{2/\ve + 1} (m/\ve)^{O(1/\ve)}$ by Theorem~\ref{Buchem}.

Thus, the total running time of {\sc MainProcedure} is $(nm/\ve)^{O(1/\ve)}$.
\end{proof}

The entire algorithm for {\sc Identical Additive NSW} consists of the following steps: 
apply {\sc Preprocessing}, apply {\sc MainProcedure}, and recover the removed sets. 
A pseudocode of the entire algorithm is shown in Algorithm~\ref{alg:03}.

\begin{algorithm}[H]
\caption{{ \sc MaxNashWelfare} }
\label{alg:03}
\begin{algorithmic}[1]
\Require instance $I =(\cala, \calg, \mathbf{v})$ 
\Ensure allocation $\pi'$
\State Apply {\sc Preprocessing} to $I$ and obtain $\cala_0$ and $\calg_0$.
\State Apply {\sc MainProcedure} to $I \sm (\cala_0,\calg_0)$ and obtain $\pi$.
\State Let $\grs$ be a bijection from $\cala_0$ to $\calg_0$.
\State Set $\pi'_i = \{ \grs(i) \}$ for $i \in \cala_0$ and set $\pi'_i = \pi_i$ for $i \in \cala \sm \cala_0$.
\State \Return $\pi'$
\end{algorithmic}
\end{algorithm}

Since the most time consuming part is {\sc MainProcedure},  
the running time of {\sc MaxNashWelfare} is $(nm/\ve)^{O(1/\ve)}$
by Proposition~\ref{runningtime}.

\section{Analysis of Approximation Performance}
\label{sec:analysis}

In this section, we show that {\sc MaxNashWelfare} returns a good approximate solution for {\sc Identical Additive NSW} and give a proof of Theorem~\ref{main result}. 
We first analyze the performance of {\sc MainProcedure} in Section~\ref{sec:analysismain}, and then  
analyze the effect of {\sc Preprocessing} in Section~\ref{sec:analysispre}.

\subsection{Approximation Performance of {\sc MainProcedure}}
\label{sec:analysismain}

In this subsection, we consider an instance $I = (\cala , \calg, \mathbf{v})$ of {\sc Identical Additive NSW} such that $v_{\max} < \mu$. 
The following lemma is easy, but useful in our analysis of {\sc MainProcedure}.

\begin{lemma}\label{neighbor}
Assume that $v_{\max} < \mu$. Let $\pi$ be the allocation returned by {\sc MainProcedure}. For any optimal solution $\pi^*$, there exists an allocation $\pi^{\tau}$ such that 
\begin{itemize}
    \item $\abs{v(\pi^{\tau}_i) - v( \pi^*_i )} \leq 2 \ve v_{\max}$ for each $i \in \cala$, and
    \item $f(\pi^{\tau}) \leq f(\pi)$.
\end{itemize}
\end{lemma}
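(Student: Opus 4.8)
The plan is to prove Lemma~\ref{neighbor} by exhibiting, for a given optimal solution $\pi^*$, an element $\tau \in D$ whose call to {\sc LoadBalancing} is guaranteed to succeed (i.e.\ cannot return ``no feasible solution''), and whose returned allocation $\pi^\tau$ is automatically close to $\pi^*$ in the sense required. Since {\sc MainProcedure} takes the best of all returned allocations, $f(\pi^\tau) \le f(\pi)$ will follow immediately once such a $\tau$ is produced.

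First I would define $\tau$ by rounding each $v(\pi^*_i)$ down to the grid $L$: set $\tau(i)$ to be the largest element of $L$ that is $\le v(\pi^*_i)$. I must check this is well-defined, i.e.\ that $v(\pi^*_i)$ lies in the range covered by the grid. By Lemma~\ref{limit the range} we have $\mu - v_{\max} < v(\pi^*_i) < \mu + v_{\max}$; the grid points $L$ run from $\mu - v_{\max}$ up to $\mu - v_{\max} + (2/\ve - 1)\ve v_{\max} = \mu + v_{\max} - \ve v_{\max}$, so each $v(\pi^*_i)$ is at least $\tau(i)$ for some grid point and lies within $\ve v_{\max}$ above it; that is, $\tau(i) \le v(\pi^*_i) < \tau(i) + \ve v_{\max}$ for every $i$. (A minor edge case: if $v(\pi^*_i)$ were to exceed the top grid point, it would still be within $\ve v_{\max}$ of it, using $v(\pi^*_i) < \mu + v_{\max}$; I would handle this by taking $\tau(i)$ to be the top grid point in that case.) Consequently $\pi^*$ itself is a feasible solution for the target load balancing instance associated with $\tau$, whose target interval for machine $i$ is $[\tau(i), \tau(i) + \ve v_{\max}]$: indeed $v(\pi^*_i) \in [\tau(i), \tau(i)+\ve v_{\max}]$ for all $i$. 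Therefore {\sc LoadBalancing} cannot output conclusion~1 of Theorem~\ref{Buchem}, so it returns an allocation $\pi^\tau$ with $v(\pi^\tau_i) \in [\tau(i) - \ve v_{\max}, \tau(i) + 2\ve v_{\max}]$ for each $i$.

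It then remains to combine the two containments. For each $i \in \cala$, both $v(\pi^*_i)$ and $v(\pi^\tau_i)$ lie in the interval $[\tau(i) - \ve v_{\max}, \tau(i) + 2\ve v_{\max}]$, which has width $3\ve v_{\max}$ — but more sharply, $v(\pi^*_i) \in [\tau(i), \tau(i)+\ve v_{\max}]$ while $v(\pi^\tau_i) \in [\tau(i)-\ve v_{\max}, \tau(i)+2\ve v_{\max}]$, so $|v(\pi^\tau_i) - v(\pi^*_i)| \le 2\ve v_{\max}$, which is exactly the first bullet. For the second bullet, note that $\tau$ (or rather its equivalence class in $D = L^{\cala}/\!\sim$) is one of the indices iterated over by {\sc MainProcedure}; since {\sc LoadBalancing} succeeds on it and {\sc MainProcedure} returns the allocation of maximum objective value among all successful calls, we have $f(\pi) \ge f(\pi^\tau)$.

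I do not anticipate a serious obstacle here; the lemma is essentially bookkeeping with the grid and the guarantees of Theorem~\ref{Buchem}. The one point that needs a little care is making sure the rounding $\tau(i) = \max\{x \in L : x \le v(\pi^*_i)\}$ is always defined and yields $v(\pi^*_i) - \tau(i) < \ve v_{\max}$, which is where Lemma~\ref{limit the range} is used and where the endpoint behavior of the grid $L$ must be checked; everything else is a direct chaining of interval membership.
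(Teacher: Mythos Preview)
Your proposal is correct and follows essentially the same approach as the paper: choose $\tau$ so that each $v(\pi^*_i)$ lies in $[\tau(i),\tau(i)+\ve v_{\max}]$ (using Lemma~\ref{limit the range} for well-definedness), observe that $\pi^*$ itself witnesses feasibility so {\sc LoadBalancing} must return some $\pi^\tau$, combine the interval guarantees to get the $2\ve v_{\max}$ bound, and use the maximality of $\pi$ in {\sc MainProcedure} for the second bullet. The paper's proof is slightly terser and handles the passage to the equivalence class in $D$ by the phrase ``by reordering the agents appropriately,'' which is exactly the relabeling you allude to; you are in fact more careful than the paper about the grid endpoints.
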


\begin{proof}
Let $\pi^*$ be a given optimal solution. Take $\tau^* \in L^{\cala}$ so that the valuation $v(\pi^*_i)$ is in the interval $[\tau^*(i), \tau^*(i) + \ve v_{\max}]$ for each $i \in \cala$. Note that such $\tau^*$ always exists by Lemma~\ref{limit the range}. Since we apply {\sc LoadBalancing} with $l_i = \tau(i)$ and $u_i = \tau(i) + \ve v_{\max}$ in {\sc MainProcedure} for some $\tau$ with $\tau \sim \tau^*$, we obtain an allocation $\pi^{\tau}$ that corresponds to $\tau$. Then, the inequality $\abs{v(\pi^{\tau}_i) - v( \pi^*_i )} \leq 2 \ve v_{\max}$ holds by reordering the agents appropriately. By the choice of $\pi$ in {\sc MainProcedure}, the inequality $f(\pi^{\tau}) \leq f(\pi)$ holds.
\end{proof}

In preparation for the analysis, we show another bound on the valuation of an agent in an optimal solution. 
Note that a similar result is shown by Alon et al.~\cite{alon1998approximation} for a different problem, 
and our proof for the following lemma is based on their argument.

\begin{lemma}
\label{limit the range2}
Assume that $v_{\max} < \mu$. Let $\pi^*$ be an optimal allocation of goods. Then, 
\[
\f{\mu}{2} < v( \pi^*_i ) < 2 \mu
\]
holds for any $i \in \cala$. 
\end{lemma}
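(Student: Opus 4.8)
The plan is to mimic the structure of the proof of Lemma~\ref{limit the range}, but to push the bookkeeping a little further using the hypothesis $v_{\max} < \mu$. Assume for contradiction that some agent has $v(\pi^*_i) \ge 2\mu$. As in Lemma~\ref{limit the range}, this forces the existence of an agent $k$ with $v(\pi^*_k) < \mu$. Pick any good $j$ assigned to $i$; since $v_j \le v_{\max} < \mu$, we have $v_j < \mu \le v(\pi^*_i) - \mu \le v(\pi^*_i) - v(\pi^*_k)$, wait — more carefully: $v(\pi^*_i) - v(\pi^*_k) - v_j > 2\mu - \mu - v_{\max} = \mu - v_{\max} > 0$. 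Reassigning $j$ from $i$ to $k$ then yields, exactly as in the earlier displayed computation,
\[
v(\pi'_i)v(\pi'_k) - v(\pi^*_i)v(\pi^*_k) = v_j\bigl(v(\pi^*_i) - v(\pi^*_k) - v_j\bigr) > 0,
\]
so $f(\pi') > f(\pi^*)$, contradicting optimality. This gives the upper bound $v(\pi^*_i) < 2\mu$.

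For the lower bound, suppose $v(\pi^*_i) \le \mu/2$ for some agent $i$. Then $v(\pi^*_i) < \mu$, so there is an agent $k$ with $v(\pi^*_k) > \mu$; in fact, combined with the upper bound just proved we may take $\mu < v(\pi^*_k) < 2\mu$. Let $j$ be a good assigned to $k$. The key inequality to check is $v(\pi^*_k) - v(\pi^*_i) - v_j > 0$: indeed $v_j \le v_{\max} < \mu$ and $v(\pi^*_i) \le \mu/2$, so $v(\pi^*_i) + v_j < \mu/2 + \mu$ — that only gives $< 3\mu/2$, which is not obviously below $v(\pi^*_k)$. I need to be sharper here. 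The right move is to use that after removing $j$ from $k$, the receiver $i$ should still have small load: one wants $v(\pi^*_i) + v_j \le v(\pi^*_k) - v_j$, i.e. to compare the two loads. Since $v(\pi^*_i) + v_j \le \mu/2 + v_{\max}$, and we want this $< v(\pi^*_k)$. If additionally $v_{\max} \le \mu/2$ this is immediate; in general one may instead argue by averaging — because some agent has load $\ge \mu$ and $v(\pi^*_i) < \mu$, and because the mean is $\mu$, there must be an agent $k$ whose load exceeds $v(\pi^*_i) + v_{\max}$, for otherwise all loads would be below $\mu/2 + v_{\max} < 2\mu$ — no, that still doesn't pin it down. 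The cleanest route, following Alon et al., is: among agents with load $> v(\pi^*_i)$, pick $k$ maximizing $v(\pi^*_k)$, note $v(\pi^*_k) > \mu$, and observe that every good $j$ in $\pi^*_k$ satisfies $v_j \le v_{\max} < \mu < v(\pi^*_k)$, hence $v(\pi^*_k) - v_j > 0$ and moreover $v(\pi^*_k) - v_j > v(\pi^*_k) - v_{\max} > \mu - v_{\max} > 0$; then one still needs $v(\pi^*_k) - v_j > v(\pi^*_i)$, which holds because $v(\pi^*_k) - v_j \ge v(\pi^*_k) - v_{\max}$ and by choice of $k$ we can assume $v(\pi^*_k) - v_{\max} \ge v(\pi^*_i)$, else a swap already improves. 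Once that inequality is in hand, reassigning $j$ to $i$ gives
\[
v(\pi'_i)v(\pi'_k) - v(\pi^*_i)v(\pi^*_k) = v_j\bigl(v(\pi^*_k) - v(\pi^*_i) - v_j\bigr) > 0,
\]
again contradicting optimality, so $v(\pi^*_i) > \mu/2$.

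I expect the main obstacle to be exactly the lower bound: unlike Lemma~\ref{limit the range}, where the width-$v_{\max}$ window followed from a single clean swap, getting the factor $2$ requires either the extra assumption $v_{\max} < \mu$ to be used twice (once to bound $v_j$, once to control the recipient's new load) or an extremal choice of the donor agent $k$ à la Alon et al. I would organize the final writeup to first dispose of the upper bound cleanly, then handle the lower bound by choosing $k$ to maximize $v(\pi^*_k)$ subject to $v(\pi^*_k) > v(\pi^*_i)$, so that the inequality $v(\pi^*_k) - v_j - v(\pi^*_i) > 0$ is forced; the hypothesis $v_{\max} < \mu$ guarantees $v(\pi^*_k) > v_j$ so that the move is well-defined and $v(\pi^*_k)$ does not drop below $\mu/2$ in a way that breaks the extremal choice. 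Everything else is the same two-line product computation already used twice in the excerpt.
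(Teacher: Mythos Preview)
Your upper bound is fine (and in fact the paper does it in one line by invoking Lemma~\ref{limit the range}: $v(\pi^*_i) < \mu + v_{\max} < 2\mu$). The gap is in the lower bound.

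Your only exchange move is ``take one good $j$ from the rich agent $k$ and give it to $i$'', which requires $v(\pi^*_k) - v(\pi^*_i) - v_j > 0$. You correctly notice that this may fail, and then try to rescue it by choosing $k$ extremally and asserting ``we can assume $v(\pi^*_k) - v_{\max} \ge v(\pi^*_i)$, else a swap already improves''. That assertion is not justified: the extremal choice of $k$ gives no control over $v(\pi^*_k) - v_{\max}$, and you never specify what the fallback swap would be. Concretely, suppose $v(\pi^*_i) = \mu/2$, $v(\pi^*_k)$ is only slightly above $\mu$, and every good in $\pi^*_k$ has value close to $v_{\max}$ with $v_{\max}$ close to $\mu$; then $v(\pi^*_k) - v_j$ can be much smaller than $v(\pi^*_i)$ for \emph{every} $j \in \pi^*_k$, and your argument stalls.

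The paper closes exactly this gap by a case split on the goods in $\pi^*_k$. If some $j \in \pi^*_k$ satisfies $v_j < v(\pi^*_k) - v(\pi^*_i)$, your single-good transfer works. Otherwise every $j \in \pi^*_k$ satisfies $v_j \ge v(\pi^*_k) - v(\pi^*_i) > \mu - \mu/2 = \mu/2 \ge v(\pi^*_i)$, and also $v_j \le v_{\max} < \mu < v(\pi^*_k)$. In this ``all goods large'' case the right move is different: swap the \emph{entire} bundle $\pi^*_i$ with a single good $j \in \pi^*_k$. The product change is
\[
v_j\bigl(v(\pi^*_k)+v(\pi^*_i)-v_j\bigr) - v(\pi^*_i)v(\pi^*_k) = (v_j - v(\pi^*_i))(v(\pi^*_k) - v_j) > 0,
\]
using exactly the two inequalities just derived. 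This second swap is the missing ingredient in your proposal.
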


\begin{proof}
The upper bound is obvious because $v( \pi^*_i ) < \mu + v_{\max} < 2 \mu$ by Lemma \ref{limit the range}. 
Assume that there is some agent $i \in \cala$ with $v( \pi^*_i ) \leq \mu / 2$. This implies that $v( \pi^*_i ) < \mu$, and so there must be at least one agent $k$ such that $v( \pi^*_k ) > \mu$. 
We treat the following cases separately.  

Assume that there is a good $j \in \pi^*_k$ that satisfies $v_j < v( \pi^*_k ) - v( \pi^*_i )$. We get a new allocation $\pi'$ from $\pi^*$ by reassigning $j$ to $i$. Then, we obtain 
\begin{align*}
    v( \pi'_i ) v( \pi'_k ) - v( \pi^*_i ) v( \pi^*_k ) &= (v( \pi^*_i ) + v_j)(v( \pi^*_k ) - v_j )- v( \pi^*_i ) v( \pi^*_k ) \\
    &= v_j(v( \pi^*_k ) - v( \pi^*_i ) -v_j  ) \\
    &> 0, 
\end{align*}
which contradicts the optimality of $\pi^*$. 

Assume that for all $j \in \pi^*_k$ the inequality $v_j \geq v( \pi^*_k ) - v( \pi^*_i )$ holds. This implies that $v_j > \mu - \mu/2 = \mu / 2 \geq v( \pi^*_i )$. We also see that $v_j \le v_{\max} < \mu < v( \pi^*_k )$. We get a new allocation $\pi'$ from $\pi^*$ by reassigning whole $\pi^*_i$ to agent $k$, and a single good $j \in \pi^*_k$ to agent $i$. Then, we obtain 
\begin{align*}
    v( \pi'_i ) v( \pi'_k ) - v( \pi^*_i ) v( \pi^*_k ) &= v_j(v( \pi^*_k ) + v( \pi^*_i ) - v_j ) - v( \pi^*_i ) v( \pi^*_k ) \\
    &= (v_j - v( \pi^*_i ) )(v( \pi^*_k )  - v_j  ) \\
    &> 0, 
\end{align*}
which contradicts the optimality of $\pi^*$.
\end{proof}

We are now ready to evaluate the performance of {\sc MainProcedure}. 

\begin{lemma} \label{preadditive}
Assume that $v_{\max} < \mu$ and $0 < \ve \leq 1/5$. 
Let $\pi$ be the allocation returned by {\sc MainProcedure} and let $\OPT$ be the optimal value. 
Then, it holds that
\[
f(\pi) \geq \OPT - 48 \ve v_{\max}.
\]
\end{lemma}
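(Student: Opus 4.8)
\textbf{Proof plan for Lemma~\ref{preadditive}.}
The plan is to compare $f(\pi)$ with $f(\pi^*)=\OPT$ by interpolating through the allocation $\pi^\tau$ provided by Lemma~\ref{neighbor}, which satisfies $\abs{v(\pi^\tau_i)-v(\pi^*_i)}\le 2\ve v_{\max}$ for each $i\in\cala$ and $f(\pi^\tau)\le f(\pi)$. Thus it suffices to show $f(\pi^\tau)\ge\OPT-48\ve v_{\max}$, i.e. to bound from below the geometric mean of the $n$ numbers $a_i:=v(\pi^\tau_i)$ in terms of the geometric mean of the $b_i:=v(\pi^*_i)$, knowing $\abs{a_i-b_i}\le 2\ve v_{\max}$ and, crucially, that by Lemma~\ref{limit the range2} we have $b_i\in(\mu/2,2\mu)$ (and $v_{\max}<\mu$, so the perturbation $2\ve v_{\max}<2\ve\mu$ is small relative to $b_i$ when $\ve\le 1/5$, keeping $a_i$ safely positive, say $a_i>\mu/2-2\ve\mu\ge\mu/10>0$).

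The key step is the following estimate: since $\log$ is concave with derivative $1/t$, for each $i$ we have
\begin{equation*}
\log a_i-\log b_i\ge -\frac{\abs{a_i-b_i}}{\min(a_i,b_i)}\ge -\frac{2\ve v_{\max}}{\mu/10}=-\frac{20\ve v_{\max}}{\mu},
\end{equation*}
using $\min(a_i,b_i)>\mu/10$. Averaging over $i$ gives $\log f(\pi^\tau)-\log\OPT\ge -20\ve v_{\max}/\mu$, hence $f(\pi^\tau)\ge\OPT\cdot e^{-20\ve v_{\max}/\mu}$. Now $\OPT\le 2\mu$ (each $v(\pi^*_i)<2\mu$ by Lemma~\ref{limit the range2}, so the geometric mean is $<2\mu$), and using the elementary inequality $1-e^{-x}\le x$ for $x\ge0$ we get
\begin{equation*}
\OPT-f(\pi^\tau)\le\OPT\left(1-e^{-20\ve v_{\max}/\mu}\right)\le\OPT\cdot\frac{20\ve v_{\max}}{\mu}\le 2\mu\cdot\frac{20\ve v_{\max}}{\mu}=40\ve v_{\max}.
\end{equation*}
Combined with $f(\pi)\ge f(\pi^\tau)$ this yields $f(\pi)\ge\OPT-40\ve v_{\max}$, which is even a little stronger than the stated bound; the slack between $40$ and $48$ gives room to absorb any looser constant one actually needs (e.g. if one prefers to lower-bound $\min(a_i,b_i)$ more crudely, or to use a mean-value estimate $\log a_i-\log b_i=(a_i-b_i)/\xi_i$ with $\xi_i$ between $a_i$ and $b_i$ and then bound $1/\xi_i$).

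The main obstacle, and the reason the hypotheses $v_{\max}<\mu$ and $\ve\le 1/5$ appear, is ensuring a uniform positive lower bound on all the $a_i$ and $b_i$: without Lemma~\ref{limit the range2} the valuations could be as small as a single good's utility, the logarithmic derivative $1/\min(a_i,b_i)$ could blow up, and no additive bound of the form $O(\ve v_{\max})$ would be possible — this is exactly the first technical issue flagged in Section~\ref{sec:highlights}. So the real content is assembling Lemmas~\ref{neighbor} and~\ref{limit the range2} correctly; once the two-sided bounds $\mu/2<v(\pi^*_i)<2\mu$ are in hand and one checks $\ve\le1/5$ keeps $a_i$ bounded away from $0$, the rest is the short convexity computation above. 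I would present it by first stating the $a_i,b_i$ setup, then the per-coordinate log estimate, then averaging and the $1-e^{-x}\le x$ step, and finally invoking $f(\pi)\ge f(\pi^\tau)$.
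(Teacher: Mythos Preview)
Your proof is correct and, at the structural level, follows the same path as the paper: invoke Lemma~\ref{neighbor} to pass to $\pi^\tau$, use Lemma~\ref{limit the range2} together with $v_{\max}<\mu$ to keep all valuations uniformly bounded in $(\mu/10,2\mu+2\ve\mu)$, bound the multiplicative gap between $f(\pi^\tau)$ and $\OPT$, and finally convert that to an additive bound using $\OPT<2\mu$. The analytic packaging differs: the paper applies the AM--GM inequality to $\OPT/S$, rewrites the resulting sum as $S/H$ with $H$ the harmonic mean of the $v(\pi^\tau_i)$, and then bounds $S$ and $1/H$ separately to get $S/H\le 24$, hence $\OPT-S\le 48\ve v_{\max}$. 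Your per-coordinate logarithmic estimate $\log a_i-\log b_i\ge -|a_i-b_i|/\min(a_i,b_i)$ followed by $1-e^{-x}\le x$ is a cleaner route that avoids the harmonic mean altogether and even yields the slightly sharper constant $40$. Both approaches are short once Lemmas~\ref{neighbor} and~\ref{limit the range2} are in hand; yours is arguably more transparent about where the constant comes from.
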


\begin{proof}
By Lemma \ref{neighbor}, there exist an allocation $\pi^{\tau}$ and an optimal solution $\pi^*$ such that 
\begin{equation}
    \abs{v( \pi^{\tau}_i)  - v(\pi^*_i)} \leq 2 \ve v_{\max}, \label{C1}
\end{equation}
and $f(\pi^{\tau}) \leq f(\pi)$.
Let $S = f(\pi^{\tau})$. Since $S \leq f(\pi)$, in order to obtain $f(\pi) \geq \OPT - 48 \ve v_{\max}$, it suffices to show that $\OPT - S \leq 48 \ve v_{\max}$.

We first evaluate the ratio between $\OPT$ and $S$ as follows: 
\begin{align} \label{0}
    \f{\OPT}{S} &= \left( \prod_{i \in \cala} \f{v( \pi^*_i )}{v(\pi_i^{\tau})} \right)^{1/n} \nonumber \\
    &\leq \f{1}{n} \sum_{i}  \f{v( \pi^*_i )}{v(\pi_i^{\tau})} &  &\text{(by AM-GM inequality)} \nonumber \\
    &\leq \f{1}{n} \sum_{i} \left( 1 +  \f{2 \ve v_{\max} }{v(\pi_i^{\tau})} \right)  & & \text{(by (\ref{C1}))} \nonumber \\
    &= 1+ \f{2 \ve v_{\max} }{n} \sum_{i}   \f{ 1 }{v(\pi_i^{\tau})},
\end{align}
where we use the inequality of arithmetic and geometric means (AM-GM inequality) in the first inequality. 
By using (\ref{0}), the difference between $\OPT$ and $S$ can be evaluated as follows:
\begin{align} \label{1}
\OPT - S &= S \left( \f{\OPT}{S} - 1 \right) \nonumber \\
&\leq 2 \ve v_{\max} \left( \f{1}{n} \sum_{i}   \f{ S }{ v(\pi_i^{\tau})}  \right)   & & \text{(by (\ref{0}))} \nonumber \\
&= 2 \ve v_{\max} \left( \f{S }{H}  \right),
\end{align}
where we define $1/H = \f{1}{n}  \sum_{i} 1/ v(\pi_i^{\tau})$, that is, $H$ is the harmonic mean of $v(\pi_i^{\tau})$. 
Therefore, to obtain an upper bound on $\OPT - S$, it suffices to give upper bounds on $S$ and $1/H$. 

We obtain an upper bound on $S$ as follows: 
\begin{align} \label{S}
    S &= \left( \prod_{i} v(\pi_i^{\tau}) \right)^{1/n} \nonumber \\
    &\leq \left( \prod_{i} ( v( \pi^*_i )+ 2 \ve v_{\max} ) \right)^{1/n} & & \text{(by (\ref{C1}))} \nonumber \\
    &\leq \left( \prod_{i} (2 \mu + 2 \ve \mu) \right)^{1/n} & & \text{(by Lemma \ref{limit the range2} and $v_{\max} < \mu$)} \nonumber \\ 
    &= 2 \mu ( 1 +  \ve ). 
\end{align}
Similarly, we obtain an upper bound on $1/H$ as follows: 
\begin{align} \label{1/H}
    \f{1}{H} &= \f{1}{n} \sum_{i}   \f{ 1 }{ v(\pi_i^{\tau})} \nonumber \\
    &\leq \f{1}{n} \sum_{i} \f{ 1 }{ v( \pi^*_i ) - 2 \ve v_{\max} } & & \text{(by (\ref{C1}))} \nonumber \\
    &\leq \f{1}{n}  \sum_{i} \f{ 1 }{ \mu/2 - 2 \ve \mu }  &  & \text{(by Lemma \ref{limit the range2} and $v_{\max} < \mu$)} \nonumber \\
    &= \f{ 2 }{ \mu (1 - 4 \ve)}, 
\end{align}
where we note that $v( \pi^*_i ) - 2 \ve v_{\max} \ge \mu/2 - 2 \ve \mu  > 0$ if $\ve \leq 1/5$. 

Therefore, for $\ve \leq 1/5$, we obtain 
\begin{equation} \label{2}
\f{S}{H} \leq \f{4(1+\ve)}{1 - 4\ve} \leq 24
\end{equation}
by (\ref{S}) and (\ref{1/H}).  
Hence, it holds that $\OPT - S \leq 48 \ve v_{\max}$ by (\ref{1}) and (\ref{2}), which completes the proof. 
\end{proof}

This lemma shows that 
{\sc MainProcedure} is an additive PTAS for {\sc Identical Additive NSW} under the assumption that $v_{\max} < \mu$.  

It is worth noting that {\sc MainProcedure} is not only an additive PTAS, but also a PTAS in the conventional sense. 

\begin{lemma}\label{ratio}
Assume that $v_{\max} < \mu$ and $0 < \ve \leq 1/5$. Let $\pi$ be the allocation returned by {\sc MainProcedure} and let $\OPT$ be the optimal value. 
Then, it holds that
\[
f(\pi) \geq \f{\OPT}{1 + 20 \ve}. 
\]
\end{lemma}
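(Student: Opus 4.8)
The idea is to reuse the quantities and bounds already established in the proof of Lemma~\ref{preadditive}, and simply convert the additive bound $\OPT - S \le 48 \ve v_{\max}$ into a multiplicative one by dividing through by $S$. Recall from that proof that $S = f(\pi^\tau) \le f(\pi)$, so it suffices to show $S \ge \OPT/(1+20\ve)$, i.e. $\OPT/S \le 1 + 20\ve$.

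\textbf{Step 1.} Start from inequality~(\ref{0}) in the proof of Lemma~\ref{preadditive}, which states
\[
\f{\OPT}{S} \le 1 + \f{2\ve v_{\max}}{n}\sum_i \f{1}{v(\pi_i^\tau)} = 1 + \f{2 \ve v_{\max}}{H},
\]
where $H$ is the harmonic mean of the $v(\pi_i^\tau)$.

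\textbf{Step 2.} Bound $v_{\max}/H$ from above. We already have the estimate~(\ref{1/H}): $1/H \le \f{2}{\mu(1-4\ve)}$. Combining this with $v_{\max} < \mu$ gives
\[
\f{v_{\max}}{H} \le \f{2\mu}{\mu(1-4\ve)} = \f{2}{1-4\ve}.
\]

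\textbf{Step 3.} Plug this into Step 1:
\[
\f{\OPT}{S} \le 1 + \f{4\ve}{1-4\ve}.
\]
For $0 < \ve \le 1/5$ we have $1 - 4\ve \ge 1/5$, so $\f{4\ve}{1-4\ve} \le 20\ve$, hence $\OPT/S \le 1 + 20\ve$. Since $f(\pi) \ge S$, we conclude $f(\pi) \ge \f{\OPT}{1+20\ve}$.

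There is no real obstacle here: every ingredient has already been proved in Lemma~\ref{preadditive}, and the only new observation is that the harmonic-mean bound can be recycled directly rather than being multiplied by the upper bound on $S$. The one point requiring a little care is verifying that $1/(1-4\ve) \le 5$ (equivalently $\f{4\ve}{1-4\ve}\le 20\ve$) precisely under the hypothesis $\ve \le 1/5$, which is exactly the same range used in Lemma~\ref{preadditive}, so the hypotheses match up cleanly.
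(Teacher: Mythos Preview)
Your proof is correct and essentially identical to the paper's own argument: both start from inequality~(\ref{0}) to get $\OPT/S \le 1 + 2\ve v_{\max}/H$, apply~(\ref{1/H}) and $v_{\max} < \mu$ to bound $v_{\max}/H \le 2/(1-4\ve)$, and then use $\ve \le 1/5$ to conclude $\OPT/S \le 1 + 20\ve$, finishing with $f(\pi) \ge S$.
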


\begin{proof}
Let $S = f(\pi^\tau)$ be the value as in the proof of Lemma~\ref{preadditive}. 
According to inequalities (\ref{0}) and (\ref{1/H}), we obtain
\begin{align*}
    \f{\OPT}{S} &\leq 1 + \f{ 2 \ve v_{\max} }{H} & & \text{(by (\ref{0}))} \\
    &\leq 1 + \f{ 4 \ve v_{\max} }{\mu (1 - 4 \ve)} & & \text{(by (\ref{1/H}))} \\
    &\leq 1 + \f{ 4 \ve  }{1 - 4 \ve}  & & \text{(by $v_{\max} < \mu$)} \\
    &\leq 1 + 20 \ve.  & & \text{(by $0 < \ve \leq 1/5$)}
\end{align*}
Since $f(\pi) \ge S$, this shows that $f(\pi) \geq \OPT / (1 + 20 \ve)$. 
\end{proof}

\subsection{Approximation Performance of {\sc MaxNashWelfare}}
\label{sec:analysispre}

We have already seen in the previous subsection that {\sc MainProcedure} is a PTAS and an additive PTAS for {\sc Identical Additive NSW} under the assumption that $v_{\max} < \mu$. In this subsection, we analyze the effect of {\sc Preprocessing} and show that {\sc MaxNashWelfare} is a PTAS and an additive PTAS. 
As a warm-up, we first show that {\sc MaxNashWelfare} is a PTAS in the conventional sense.

\begin{proposition}
\label{prop:ratio}
Let $I = (\cala, \calg, \mathbf{v})$ be an instance of {\sc Identical Additive NSW} and 
suppose that $0 < \ve \leq 1/5$. 
Let $\pi$ be the allocation returned by {\sc MaxNashWelfare} and let $\OPT$ be the optimal value. 
Then, it holds that
 \[
  f(\pi) \ge \f{\OPT}{1 + 20 \ve}.
 \]
\end{proposition}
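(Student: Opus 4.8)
The plan is to reduce the statement for {\sc MaxNashWelfare} to the already-established bound for {\sc MainProcedure} (Lemma~\ref{ratio}), using the structural fact about {\sc Preprocessing} (Lemma~\ref{preprocessing}). Let $\cala_0, \calg_0$ be the output of {\sc Preprocessing} applied to $I$, let $I' = I \sm (\cala_0, \calg_0)$, and let $\pi$ be the allocation returned by {\sc MainProcedure} on $I'$, so that $\pi'$ (the output of {\sc MaxNashWelfare}) agrees with $\pi$ on $\cala \sm \cala_0$ and assigns a single good $\grs(i)$ to each $i \in \cala_0$.

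The first key step is to relate $\OPT(I)$ and $\OPT(I')$. By Lemma~\ref{preprocessing}, in any optimal solution $\pi^*$ of $I$, each agent who receives a good $j \in \calg_0$ (which has $v_j \geq \mu$ at the moment it is chosen) receives nothing else; hence, after a suitable relabeling, restricting $\pi^*$ to $\cala \sm \cala_0$ gives a feasible allocation of $I'$, and conversely any allocation of $I'$ extends to one of $I$ by handing the goods of $\calg_0$ to the agents of $\cala_0$ one-to-one. Writing $P = \prod_{j \in \calg_0} v_j$ and $n = |\cala|$, $n_0 = |\cala_0|$, this correspondence yields $\OPT(I)^n = P \cdot \OPT(I')^{\,n - n_0}$, and likewise $f_I(\pi')^n = P \cdot f_{I'}(\pi)^{\,n-n_0}$, where $f_I$ and $f_{I'}$ denote the Nash social welfare functions of the respective instances. (One should be a little careful that {\sc Preprocessing} removes goods in a specific order, but since the removed agents are interchangeable and the removed goods are simply a fixed set, the product $P$ and the count $n_0$ are well-defined for the returned $\cala_0, \calg_0$.)

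The second step is to invoke Lemma~\ref{ratio}: since $I'$ satisfies $v_{\max}(I') < \mu(I')$ by the termination condition of {\sc Preprocessing} noted in Section~\ref{sec:preprocessing}, and $0 < \ve \leq 1/5$, we have $f_{I'}(\pi) \geq \OPT(I') / (1 + 20\ve)$. Raising both sides to the power $(n - n_0)/n$ (an exponent in $[0,1]$, so the inequality is preserved) and multiplying by $P^{1/n}$ gives
\[
 f_I(\pi') = \bigl(P \cdot f_{I'}(\pi)^{\,n-n_0}\bigr)^{1/n} \geq \bigl(P \cdot \OPT(I')^{\,n-n_0}\bigr)^{1/n} \cdot (1+20\ve)^{-(n-n_0)/n} \geq \f{\OPT(I)}{1 + 20\ve},
\]
using $(1+20\ve)^{-(n-n_0)/n} \geq (1+20\ve)^{-1}$. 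The edge case $|\cala| > |\calg|$, where {\sc Preprocessing} is not really invoked and $\OPT(I) = 0$, is trivial and can be dispatched in one line; similarly if $n_0 = n$ (all agents removed) then $I'$ is empty and the claim is immediate.

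The main obstacle is the bookkeeping in the first step: one must argue cleanly that $\OPT(I)^n = P \cdot \OPT(I')^{n-n_0}$ rather than merely an inequality in one direction. The "$\leq$" direction (every optimum of $I$ restricts to a feasible solution of $I'$ after peeling off the $\calg_0$-goods) is exactly Lemma~\ref{preprocessing} applied inductively along the while-loop, and the "$\geq$" direction is the trivial extension; but making the induction precise — that each good added to $\calg_0$ still satisfies $v_j \geq \mu$ of the current reduced instance, which is how the lemma gets applied at each stage — needs a short argument, and this is essentially the content already sketched in the paragraph after Algorithm~\ref{alg:01}. Everything after that is a monotonicity manipulation of the geometric mean under the exponent $(n-n_0)/n$.
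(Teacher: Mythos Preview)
Your proposal is correct and follows essentially the same approach as the paper: both arguments use Lemma~\ref{preprocessing} to peel off the preprocessed agents and goods, express the ratio $\OPT/f(\pi)$ as $(\OPT(I')/A(I'))^{(n-n_0)/n}$, and then apply Lemma~\ref{ratio} together with the fact that the exponent $(n-n_0)/n\le 1$. The paper's write-up is slightly more compact (it works directly with the ratio rather than introducing the product $P$), but the logical content is the same.
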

 
 \begin{proof}
 Let $\pi^*$ be an optimal allocation. Let $\cala_0$ and $\calg_0$ be the set of agents and goods removed in {\sc Preprocessing} respectively. Set $I' = I \sm (\cala_0,\calg_0)$. In an optimal solution $\pi^*$, for each good $j \in \calg_0$ there exists an agent $i$ that satisfies $\pi^*_i = \{ j \}$  by Lemma \ref{preprocessing}. By rearranging the agents and the goods appropriately, we can assume that $\pi^*_i= \pi_i$ for each $i \in \cala_0$. Then the following holds:
\[
     \f{\OPT}{f(\pi)} = \left( \prod_{i \in \cala} \f{v( \pi^*_i )}{v(\pi_i)} \right)^{1/n}  
     = \left( \prod_{i \in \cala \sm \cala_0} \f{v( \pi^*_i )}{v(\pi_i)} \right)^{1/n}. 
\]
Let $A(I')$ be the objective function value of the solution returned by {\sc MainProcedure} for instance $I'$, and let $\OPT(I')$ be the optimal value of instance $I'$. Set $k = |\cala_0|$. Then, we obtain 
\[
     \left( \prod_{i \in \cala \sm \cala_0} \f{v( \pi^*_i )}{v(\pi_i)} \right)^{1/n} = \left(  \f{\OPT(I')}{A(I')} \right)^{(n-k)/n} 
     \leq (1 + 20 \ve)^{(n-k)/n} 
     \leq 1 + 20 \ve 
\]
by Lemma \ref{ratio}, which completes the proof. 
\end{proof}
 
The proof of Proposition~\ref{prop:ratio} is easy, because the multiplicative error is not amplified when we recover the agents and goods removed in {\sc Preprocessing}, that is, 
${\OPT} / {f(\pi)} \le {\OPT(I')} / {A(I')}$. 
However, this property does not hold when we consider the additive error, which makes the situation harder. 
Nevertheless, we show that the additive error of $f(\pi)$ is bounded by $O(\ve  v_{\max})$ with the aid of the mean-value theorem for differentiable functions.

 \begin{proposition}
 \label{prop:performanceMNW}
 Let $I = (\cala, \calg, \mathbf{v})$ be an instance of {\sc Identical Additive NSW} and suppose that $0 < \ve \leq 1/192$. 
 Let $\pi$ be the allocation returned by {\sc MaxNashWelfare} and let $\OPT$ be the optimal value. 
 Then, it holds that
 \[
 f(\pi) \ge \OPT - 192 \ve v_{\max}.
 \]

 \end{proposition}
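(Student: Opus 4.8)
The plan is to reduce the whole statement to Lemma~\ref{preadditive} applied to the reduced instance, while carefully tracking how {\sc Preprocessing} and the final recovery step in {\sc MaxNashWelfare} distort the objective. Write $\cala_0,\calg_0$ for the sets returned by {\sc Preprocessing}, put $I'=I\sm(\cala_0,\calg_0)$, $k=\abs{\cala_0}=\abs{\calg_0}$, $n=\abs{\cala}$, $t=(n-k)/n$, and $P=\bigl(\prod_{j\in\calg_0}v_j\bigr)^{1/n}$; also write $v_{\max}(I')=\max_{j\in\calg\sm\calg_0}v_j$. We may assume $\abs{\cala}\le\abs{\calg}$, since otherwise $\OPT=0$ and there is nothing to prove. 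By Lemma~\ref{preprocessing} (applied repeatedly), after relabeling agents an optimal allocation of $I$ restricts to an optimal allocation of $I'$ together with one singleton bundle $\{j\}$ per $j\in\calg_0$, and by construction $\pi$ has the same shape; hence both objective values factor as $\OPT=P\cdot\OPT(I')^{t}$ and $f(\pi)=P\cdot A(I')^{t}$, exactly as in the proof of Proposition~\ref{prop:ratio}. If $k=0$ then no good has $v_j\ge\mu(I)$, so $v_{\max}<\mu(I)$ and Lemma~\ref{preadditive} applied directly to $I$ already gives $f(\pi)\ge\OPT-48\ve v_{\max}$. So assume $k\ge1$ from now on (the degenerate case $k=n$, where $I'$ is empty and $\OPT=f(\pi)=P$, is trivial).

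Set $a=A(I')$ and $b=\OPT(I')$, so $a\le b$. After the preprocessing $v_{\max}(I')<\mu(I')$, so Lemma~\ref{preadditive} applied to $I'$ gives $b-a\le48\ve v_{\max}(I')$, and Lemma~\ref{limit the range2} applied to $I'$ gives $b>\mu(I')/2$; combined with $\ve\le1/192$ this yields $a>\mu(I')/2-48\ve\mu(I')\ge\mu(I')/4>0$. Now I would linearize $b^{t}-a^{t}$ using the mean-value theorem for $\phi(x)=x^{t}$: there is $\xi\in[a,b]$ with $b^{t}-a^{t}=t\,\xi^{t-1}(b-a)$, and since $0<t\le1$ and $\xi\ge a>\mu(I')/4$ we have $\xi^{t-1}\le(\mu(I')/4)^{t-1}$. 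Using in addition $v_{\max}(I')<\mu(I')$ and the identity $\mu(I')\cdot(\mu(I')/4)^{t-1}=4\,(\mu(I')/4)^{t}$, we obtain
\[
\OPT-f(\pi)=P\bigl(b^{t}-a^{t}\bigr)\le 48\,\ve\,t\,P\,(\mu(I')/4)^{t-1}\,v_{\max}(I')\le 192\,\ve\,t\,P\,(\mu(I')/4)^{t}.
\]

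Since $t\le1$, it remains to establish $P\,(\mu(I')/4)^{t}\le v_{\max}$. Raising to the $n$-th power, $\bigl(P\,(\mu(I')/4)^{t}\bigr)^{n}=\bigl(\prod_{j\in\calg_0}v_j\bigr)\,(\mu(I')/4)^{n-k}$, which is a product of $n$ numbers: the $k$ values $v_j$ with $j\in\calg_0$, each at most $v_{\max}$, and $n-k$ copies of $\mu(I')/4$. So it suffices to show $\mu(I')\le v_{\max}$, and this is the one genuinely new ingredient. It follows from the structure of {\sc Preprocessing}: removing a good $j$ with $v_j\ge\mu^{\mathrm{cur}}$ together with one agent does not increase the current average valuation — indeed $\f{V-v_j}{n'-1}\le\f{V}{n'}$ precisely when $v_j\ge\f{V}{n'}$ — so $\mu$ is nonincreasing along the run of {\sc Preprocessing}; hence, writing $j_k$ for the last good removed, $\mu(I')\le\mu^{\mathrm{cur}}\le v_{j_k}\le v_{\max}$. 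Substituting back gives $\OPT-f(\pi)\le192\ve v_{\max}$, which is the claim.

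I expect the main obstacle to be precisely the amplification issue flagged in Section~\ref{sec:highlights}: an additive error of order $\ve v_{\max}(I')$ on the reduced instance is multiplied by the factor $P\,a^{t-1}=f(\pi)/a$ during recovery, and this factor is not obviously bounded. What makes it controllable is the combination of (i) the mean-value-theorem step, which turns $b^{t}-a^{t}$ into something with the exponent $(n-k)/n$ available, (ii) the lower bound $a>\mu(I')/4$, and (iii) trading one surviving factor of $v_{\max}(I')$ against $\mu(I')$ via $v_{\max}(I')<\mu(I')$, after which the whole product collapses into a geometric mean of $n$ quantities each at most $v_{\max}$; step (iii) is where the monotonicity of $\mu$ under {\sc Preprocessing} is essential.
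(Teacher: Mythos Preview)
Your proof is correct and follows essentially the same route as the paper: factor $\OPT$ and $f(\pi)$ through $x\mapsto P\,x^{(n-k)/n}$, apply the mean-value theorem, and bound the derivative at the intermediate point using the lower bound on $A(I')$ coming from Lemmas~\ref{preadditive} and~\ref{limit the range2}.

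The only real difference is in the last bounding step. The paper lower-bounds the intermediate point by $(1/2-48\ve)v_{\max}(I')\ge v_{\max}(I')/4$ and then closes with the trivial inequality $v_{\max}(I')\le v_{\max}$, obtaining $g'(c)\le (4v_{\max}/v_{\max}(I'))^{k/n}\le 4v_{\max}/v_{\max}(I')$ so that the factor $v_{\max}(I')$ cancels. You instead pass through $\mu(I')$: you lower-bound $a$ by $\mu(I')/4$, absorb $v_{\max}(I')<\mu(I')$, and then need the new observation that {\sc Preprocessing} never increases the running average, giving $\mu(I')\le v_{\max}$. That monotonicity argument is correct and elegant, but it is an extra ingredient the paper avoids; the paper's version gets by with only $v_{\max}(I')\le v_{\max}$, which is immediate.
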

 \begin{proof}
Let $A=f(\pi)$ and let $\pi^*$ be an optimal allocation. Let $\cala_0$ and $\calg_0$ be the set of agents and goods removed in {\sc Preprocessing}, respectively.  Set $k = |\cala_0|$. In an optimal solution $\pi^*$, for each good $j \in \calg_0$ there exists an agent $i$ that satisfies $\pi^*_i = \{ j \}$  by Lemma \ref{preprocessing}. By rearranging the agents and the goods appropriately, we can assume that $\pi^*_i = \pi_i$ for each $i \in \cala_0$.
Set $I' = I \sm (\cala_0,\calg_0)$. Let $A(I')$ be the objective function value of the solution returned by {\sc MainProcedure} for instance $I'$, and let $\OPT(I')$ be the optimal value of instance $I'$. 

We define a function $g \colon \R \to \R$ as
  \[
  g(x) = \left( \prod_{j \in \calg_0} v_j \right)^{1/n} x^{(n-k)/n}.
  \]
 By using $g$, the expression to be evaluated can be written as follows:  
  \begin{align}
      \OPT - A = g(\OPT(I')) - g(A(I')). \label{eq:06}
  \end{align}
  Since $g$ is differentiable, by the mean value theorem, there exists a real number $c$ such that 
  \begin{align}
      & A(I') \leq c \leq \OPT(I'),  \label{eq:07} \\
      & g(\OPT(I')) - g(A(I')) = (\OPT(I') - A(I')) g'(c). \label{eq:08}
  \end{align}
  By (\ref{eq:06}), (\ref{eq:08}), and Lemma~\ref{preadditive}, we obtain 
  \begin{align} \label{OPT-A}
   \OPT - A \leq 48 \ve v_{\max}(I') g'(c), 
  \end{align}
  where $v_{\max}(I') = \max_{j \in \calg \sm \calg_0} v_j$.   
  Therefore, all we need to do is to evaluate $g'(c)$. 

  For this purpose, we first give a lower bound on $c$ as follows: 
 \begin{align}
     c &\geq A(I')  & & (\text{by (\ref{eq:07})}) \nonumber\\
     &\geq \OPT(I') - 48 \ve v_{\max}(I') & & \text{(by Lemma~\ref{preadditive})} \nonumber \\
     &= \left( \prod_{i \in \cala \sm \cala_0} v( \pi^*_i ) \right)^{1/(n-k)} - 48 \ve v_{\max}(I') \nonumber \\ 
     &\geq \left( \prod_{i \in \cala \sm \cala_0} \f{\mu(I')}{2} \right)^{1/(n-k)} - 48 \ve v_{\max}(I')  & & \text{(by Lemma \ref{limit the range2})} \nonumber \\
     &\geq \f{\mu(I')}{2} - 48 \ve v_{\max}(I') & & \text{(by $|\cala \sm \cala_0| = n-k$)} \nonumber \\
     &\geq \left( \f{1}{2} - 48 \ve \right) v_{\max}(I').  & & \text{(by $\mu(I') > v_{\max}(I')$)} \label{eq:09}
 \end{align}
 By using this inequality, we obtain the following upper bound on $g'(c)$: 
 \begin{align} \label{f'(c)}
     g'(c) &= \f{n-k}{n} \left( \prod_{j \in \calg_0} v_j \right)^{1/n} \left( \f{1}{c}  \right)^{k/n} \nonumber \\
     &\leq  \left( \f{v_{\max}}{c}  \right)^{k/n} & & \text{(by $|\calg_0| = k$ and $v_j \le v_{\max}$)}  \nonumber \\
     &\leq \left( \f{v_{\max}}{ (1/2 - 48 \ve) v_{\max}(I')}  \right)^{k/n} \nonumber & & \text{(by (\ref{eq:09}))} \\
     &\leq \left( \f{ 4 v_{\max}}{v_{\max}(I')}  \right)^{k/n} \nonumber & & \text{(by $0 < \ve \le 1/192$)} \\
     &\leq \f{ 4 v_{\max}}{v_{\max}(I')}.  & & \text{(by $v_{\max} \geq v_{\max}(I')$)} 
 \end{align}
 Hence, we obtain $\OPT - A \leq 192 \ve v_{\max}$ from (\ref{OPT-A}) and (\ref{f'(c)}), which completes the proof. 
 \end{proof}
 
 By setting $\ve$ appropriately, Theorem \ref{main result} follows from Proposition \ref{prop:performanceMNW}.

\begin{proof}[Proof of Theorem \ref{main result}]
Suppose that we are given an instance of {\sc Identical Additive NSW} and a real value $\ve > 0$. 
Define $\ve'$ as the largest value subject to $1 / \ve'$ is an integer and $\ve' \le \min ( 1/ 192,\  \ve / 192)$. 
That is, $\ve' := 1/ \lceil \max(192, 192/\ve) \rceil$. 
Then, apply {\sc MaxNashWelfare} in which $\ve$ is replaced with $\ve'$. 
Since $0 < \ve' \leq 1/192$,   
{\sc MaxNashWelfare} returns an allocation $\pi$ such that 
$f(\pi) \ge \OPT - 192 \ve' v_{\max} \ge \OPT - \ve v_{\max}$
by Proposition \ref{prop:performanceMNW}. 
As described in Section~\ref{sec:description}, the running time of {\sc MaxNashWelfare} is $(n m/\ve')^{O(1/\ve')}$, which can be rewritten as $(n m/\ve)^{O(1/\ve)}$. 
This completes the proof of Theorem \ref{main result}. 
\end{proof}




\bibliography{citation}

\end{document}